\newtheorem{theorem}{Theorem}
\newtheorem{example}[theorem]{Example}
\newtheorem{lemma}[theorem]{Lemma}
\newtheorem{proposition}[theorem]{Proposition}
\newtheorem{definition}[theorem]{Definition}
\newenvironment{remark}{\textbf{Remark}}
\newcommand{\<}{\langle}
\newcommand{\rg}{\rangle}
\newcommand{\ud}{\mathrm{d}}
\newcommand{\oomit}[1]{}
\newcommand{\define}{\widehat{=}}
\newcommand{\uu}{\mathbf{u}}
\newcommand{\xx}{\mathbf{x}}
\newcommand{\fb}{\mathbf{f}}
\newcommand{\QQ}{\mathbb{Q}}
\newcommand{\RR}{\mathbb{R}}
\def \la {\langle}
\def \ra {\rangle}
\begin{document}



\title{Computing Semi-algebraic Invariants for Polynomial Dynamical Systems 
}

\numberofauthors{3}

\author{
\alignauthor Jiang Liu  \\
 \affaddr{State Key Lab. of Comp. Sci. \\
 Institute of Software \\
 Chinese Academy of Sciences} \\
  \email{liuj@ios.ac.cn}
 \alignauthor
Naijun Zhan\\
   \affaddr{State Key Lab. of Comp. Sci. \\
    Institute of Software \\
   Chinese Academy of Sciences}\\
  \email{znj@ios.ac.cn}
 \alignauthor
Hengjun Zhao
\\
   \affaddr{State Key Lab. of Comp. Sci. \\
   Institute of Software \\
    Chinese Academy of Sciences} \\
  \email{zhaohj@ios.ac.cn}
}

\maketitle

\begin{abstract}
In this paper, we consider an extended concept of invariant for polynomial dynamical system (PDS) with domain and initial condition, and establish a sound and complete criterion for checking semi-algebraic invariants (SAI) for such PDSs. The main idea is encoding relevant dynamical properties as conditions on the high order Lie derivatives of polynomials occurring in the SAI. A direct consequence of this criterion is a relatively complete method of SAI generation based on template assumption and semi-algebraic constraint solving. Relative completeness means if there is an SAI in the form of a predefined template, then our method can indeed find one using this template.

\end{abstract}

%

\keywords{Invariant, Semi-algebraic set, Polynomial dynamical system}

\section{Introduction}
Hybrid systems are those systems involving both continuous evolutions and discrete transitions. How to design correct (desired) hybrid systems is a grand challenge in computer science and control theory.
From a computer scientist's point of view, the main concern on hybrid systems up to now is to verify so-called safety properties. A safety property claims that some unsafe state is never reachable from any initial state along with
any trajectory of the system.

\subsection{Motivation}
Directly computing the reachable set is a natural way to address this issue. As we know, there are two well-developed techniques for computing reachable set so far, that is, techniques based on
model-checking \cite{CE81,SQ82} and the decision procedure of Tarski algebra \cite{tarski51}, respectively. However, the former technique requires the decidability and therefore can only be applied to some simple hybrid systems, e.g.
timed automata \cite{Alur94}, multirate automata \cite{Alur95}, rectangular automata \cite{Puri94,Henzinger95}, and so on. Comparably speaking, the latter technique has a wider scope of applications. For example,  in \cite{LPY02}  how to compute reachable sets for three classes of special linear hybrid systems are investigated. However, this technique heavily depends on whether  the explicit solutions of the considered differential equations are or can be reduced to polynomials.  So, this approach can not be applied to general linear hybrid systems, let alone  nonlinear systems.

To deal with more complicated systems, recently, a deductive method has been established and successfully applied in practice \cite{PlatzerClarke08,PlatzerClarke09}, which can be seen as a generalization of the so-called Floyd-Hoare-Naur inductive assertion method.
Inductive assertion method is thought to be the dominant method for the verification of sequential programs. To generalize the inductive
method to hybrid systems, a logic similar to Hoare logic which can deal with continuous dynamics is necessary. For example, differential-algebraic dynamic logic \cite{Platzer10} due to Platzer was invented by extending dynamic logic with continuous statements. Recently, Liu et al \cite{aplas} had another effort by extending Hoare logic to hybrid systems for the same purpose.

The most challenging part of the inductive method is how to discover invariants of hybrid systems.  An invariant is a property that holds at all reachable states from any initial state that satisfies this property. If we can get invariants that are strong enough to imply the safety property to be verified, then we succeed in safety verification without solving differential equations, while differential equations have to be exactly solved or approximated in the methods via directly
computing reachable sets. In particular, if the term expressions of a hybrid system are or can be reduced to polynomials, the so-called \emph{inductive invariants} \cite{Manna04} can be effectively generated using the constraint-based approach \cite{Tiwari08}.

The key issue in generating inductive invariants of a hybrid system is to deal with continuous dynamics, i.e. to generate {continuous invariant} of the continuous evolution at each location (mode) of the hybrid system. A location (mode) of a hybrid system is usually represented by a \emph{ continuous dynamical system with domain and initial condition} (CDSwDI for short) of the form $(H,\mathbf{f},\Xi)$, where $\mathbf{f}$ is a vector field, $H$ is a domain restriction of continuous evolution, and $\Xi\subseteq H$ is a set of initial states.
A property $\varphi$ is called a \emph{continuous invariant} (CI for short) of $(H, \mathbf{f},\Xi)$, if
it is always satisfied along any trajectory whose starting point satisfies $\Xi$, as long as the trajectory still remains in domain $H$. For $\varphi$ to be a CI of $(H, \mathbf{f},\Xi)$, the more complex the forms of $H$, $\fb$, $\Xi$ and $\varphi$ are, the more intricate constraints should be induced accordingly.
A global (discrete) inductive invariant of a hybrid system consists of a set of CIs such that: the initial condition of the initial location  (mode) entails the CI of the initial location, and if there is a discrete transition between two locations of the system,
then the CI at the pre-location implies the CI at the post-location w.r.t. the discrete transition.
There are many methods, e.g. \cite{yzzx10}, for certifying and generating global inductive invariants of a system by using the global inductiveness. Therefore in this paper we only focus on how to generate CI at a single location (mode), i.e. a CDSwDI.

\subsection{Related Work}
In the literature, lots of efforts have been made towards algebraic or semi-algebraic continuous invariants generation for polynomial dynamical systems, even though CI may have different synonyms.

The generation of algebraic invariants, i.e. sets defined by polynomial equations are usually based on the theory of ideals in polynomial ring.
In \cite{Manna04}, to handle continuous differential equations, two strong continuous consecution conditions are imposed on the predefined templates, and then the two conditions are encoded as ideal membership statements.
The work in \cite{Tiwari05} showed that the set of algebraic invariants of a linear system, which forms a polynomial ideal, is computable. The above two approaches both use \emph{Gr\"obner bases} computation. An efficient technique that computes algebraic invariants as the greatest fixed point of a monotone operator over \emph{pseudo ideals} was presented in \cite{Sankar10}.

As for the polynomial inequality case, to guarantee that $p\geq 0$ is a CI of a PDS $(H, \mathbf{f},\Xi)$, it is useful to analyze the direction of $\fb$  with regard to the set $p\geq 0$.
In \cite{Prajna04,Prajna07}, the authors proposed the notion \emph{barrier certificates} for safety verification of hybrid systems. A polynomial $p$ could be a barrier certificate if the unsafe region is included in $p<0$, and at any point in $p=0$, $\fb$ points (strictly) inwards the set $p\geq 0$.
Such polynomial barrier certificates can be effectively computed using sum of squares decomposition and semi-definite programming. In \cite{Tiwari08} a similar idea is adopted and by reducing the conditions of CI to semi-algebraic constraints, invariants that are boolean combinations of polynomial equations and inequalities can be generated. Unfortunately, the
approaches in \cite{Prajna04,Tiwari08} were discovered  in \cite{Taly09,Tiwari10,Platzer10} to have certain problems with their soundness, if at the boundary of a CI, $\fb$ is not strictly inward the invariant set.
In \cite{PlatzerClarke08} the authors proposed the notion of \emph{differential invariant} and the principle of differential induction. Basically, $p\geq 0$ is a differential invariant of $(H, \mathbf{f},\Xi)$ if at any point in $H$, the directional derivative of $p$ in the direction of $\fb$ is non-negative. Such requirement is strong, but provide a sound and effective way of generating complex semi-algebraic continuous invariants.

\subsection{Our Contribution}
The problem of checking inductiveness for continuous dynamical systems was considered in \cite{Taly09} and \cite{Tiwari10}. Therein various sound checking rules are presented, which are also complete for classes of continuous invariants, e.g. linear, quadratic, convex and smooth invariants. The authors even proposed a sound and relatively complete rule using higher order \emph{Lie derivatives}, which is quite similar to ours. However, in their relatively complete rule there are infinitely many candidate tests and thus is computationally infeasible. Our work in this paper actually resolves this problem and completes the gap left open in \cite{Taly09, Tiwari10}

The relative completeness of our method means that for a given PDS, if there is an SAI  of the predefined
template, then our method can indeed discover one SAI using this template.
Thus, there are two advantages with our approach comparing with the well-established approaches: firstly, more general SAIs can be generated; secondly, a by-product of the completeness of our approach is that whether a given semi-algebraic set is really an SAI of a given PDS is decidable. This is quite useful in the interplay of discrete invariant generation (global) and CI generation (local).

\subsection{Paper Organization}
The rest of this paper is organized as follows.  Section \ref{sec:prelim} presents some basic notions and fundamental theories  on  algebraic geometry and dynamical system. Section \ref{sec:di} gives a formal definition of the  SAI generation problem. In Section \ref{sec:fund}, we prove the fundamental results based on which our method is developed. Section \ref{sec:idea} illustrates the basic idea of our approach in simple cases. How to apply our approach to general cases is investigated  in Section \ref{sec:AIG}. Two case studies are given in Section \ref{sec:case}. Section \ref{sec:con} concludes this paper with a discussion of future work.

\section{Preliminaries}\label{sec:prelim}

In this section, we will recall some basic notions.

\subsection{Polynomial Ideal Theory}
Let $\mathbb{K}$ be an algebraic field and $\mathbb{K}[x_1, \dots, x_n]$ denote the polynomial ring with
coefficients in $\mathbb{K}$. In this paper, $\mathbb{K}$ will be taken as the rational number field $\QQ$. Customarily, let $\xx$ denote
the $n$-tuple $(x_1, \cdots, x_n)$ with $\dim{(\xx)}=n$, and a polynomial in $\mathbb{Q}[x_1, \dots, x_n]$ ($\mathbb{Q}[\xx]$ for short) may be written as $p(\xx)$ or $p$ simply. A parametric polynomial
\[p(\uu,\xx)\in \QQ[u_1, u_2, \ldots, u_t, x_1, x_2, \ldots,
x_n]\] is called a {\it template}, where $\xx$ are variables taking values from $\mathbb{R}^n$ and $\uu$ are coefficient parameters taking values from $\mathbb{R}^t$. Given $\uu_0\in \mathbb{R}^t$, we call the polynomial $p_{\uu_0}(\xx)$ resulted by substituting
$\uu_0$ for $\uu$ in $p(\uu,\xx)$ an {\it instantiation} of $p(\uu,\xx)$.

In what follows, we recall the theory of polynomial ideal (refer to \cite{clo}).
\begin{definition}
A subset $I\subseteq \mathbb{K}[\xx]$ is called an \emph{ideal} if
\begin{enumerate}
\item[i)] $0\in I$.
\item[ii)] If $p(\xx), g(\xx)\in I$, then $p(\xx)+g(\xx)\in I$.
\item[iii)] If $p(\xx)\in I$ and $h(\xx)\in \mathbb{K}[\xx]$, then $p(\xx)h(\xx)\in I$.
\end{enumerate}
\end{definition}
It is easy to check that if $p_1, \cdots, p_k\in \mathbb{K}[\xx]$, then
\[\<p_1, \cdots, p_k\rg = \{\sum_{i=1}^{k}p_i h_i\mid \forall i \in [1,k]. \,h_i \in
\mathbb{K}[\xx]\}\]
is an ideal. In general, we say an ideal $I$ is {\itshape generated} by polynomials $g_{1}, \dots, g_k\in \mathbb{K}[\xx]$ if $I=\<g_{1},   \dots, g_k\rg$, and $\{g_1, \ldots, g_k\}$  is called a set of \emph{generators}  of $I$.
\begin{theorem}[Hilbert Basis Theorem]
Every ideal \, $I\subseteq \mathbb{K}[\xx]$ has a finite generating set. That is, $I=\<g_{1}, \dots, g_k\rg$ for some $g_{1}, \dots, g_k
\in \mathbb{K}[\xx]$.
\end{theorem}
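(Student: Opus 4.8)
The plan is to prove this via the theory of monomial orderings, following the route of \cite{clo}. First I would fix a monomial order $\prec$ on $\mathbb{K}[\xx]$ (for concreteness, the lexicographic order), so that every nonzero polynomial $f$ has a well-defined leading term $\mathrm{LT}(f)$. The case $I=\{0\}$ is trivial, so assume $I\neq\{0\}$. The central idea is that the combinatorially simple object $\<\mathrm{LT}(I)\rg$ --- the ideal generated by the leading terms of all nonzero members of $I$ --- already controls the whole of $I$.

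The key steps are as follows. Since $\<\mathrm{LT}(I)\rg$ is a monomial ideal, I would invoke the fact that every monomial ideal is generated by finitely many monomials (Dickson's Lemma); hence there exist $g_1,\dots,g_k\in I$ with $\<\mathrm{LT}(I)\rg=\<\mathrm{LT}(g_1),\dots,\mathrm{LT}(g_k)\rg$. I then claim $I=\<g_1,\dots,g_k\rg$. The inclusion $\supseteq$ is immediate from the ideal axioms. For $\subseteq$, take any $f\in I$ and apply the division algorithm to divide $f$ by the ordered tuple $(g_1,\dots,g_k)$, obtaining $f=\sum_{i=1}^{k} q_i g_i+r$ where no monomial of the remainder $r$ is divisible by any $\mathrm{LT}(g_i)$. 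Since $r=f-\sum_{i} q_i g_i\in I$, if $r\neq 0$ then $\mathrm{LT}(r)\in\<\mathrm{LT}(I)\rg=\<\mathrm{LT}(g_1),\dots,\mathrm{LT}(g_k)\rg$, which forces $\mathrm{LT}(r)$ to be divisible by some $\mathrm{LT}(g_i)$, contradicting the defining property of the remainder. Hence $r=0$ and $f\in\<g_1,\dots,g_k\rg$.

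I expect the main obstacle to lie in the two ingredients on which this argument rests rather than in the bookkeeping above. The first is Dickson's Lemma, the finiteness of generation for monomial ideals; this is the genuine combinatorial core and itself requires an induction on the number of variables $n$. The second is setting up a terminating division algorithm with respect to $\prec$, so that a remainder $r$ with the stated non-divisibility property is guaranteed to exist. An alternative route I would keep in reserve is the classical induction on $n$: writing $\mathbb{K}[\xx]=\mathbb{K}[x_1,\dots,x_{n-1}][x_n]$ and proving the transfer lemma that $R[x]$ is Noetherian whenever $R$ is, by tracking the ascending chain of ideals formed by the leading coefficients of the degree-$d$ polynomials in $I$ for each $d$. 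There the difficulty shifts to showing this chain stabilizes, which again reduces to finite generation in the coefficient ring.
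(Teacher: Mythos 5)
Your proposal is correct and matches the paper's treatment: the paper gives no proof of its own but simply defers to \cite{clo}, and your argument (leading-term ideal, Dickson's Lemma, division algorithm, with the Noetherian-transfer induction held in reserve) is precisely the standard proof in that reference. The one genuine ingredient you rightly flag as nontrivial is Dickson's Lemma; the rest of your outline is complete and sound.
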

For its proof, please refer to \cite{clo}. Based upon this result, it is easy to see that
\begin{theorem}[Ascending Chain Condition]\label{ACC}
For any ascending chain
\[I_1\subseteq I_2 \subseteq \cdots \subseteq I_\ell \subseteq \cdots\] of ideals in polynomial ring $\mathbb{K}[\xx]$, there must be $N$ such that for all $\ell\geq N$, $I_\ell=I_N$.
\end{theorem}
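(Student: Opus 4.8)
The plan is to reduce the statement to the Hilbert Basis Theorem by passing to the union of the chain. First I would set $I := \bigcup_{\ell \geq 1} I_\ell$ and argue that $I$ is itself an ideal of $\mathbb{K}[\xx]$. Condition (i) is clear since $0 \in I_1 \subseteq I$, and closure under multiplication by arbitrary polynomials (condition iii) is immediate: any $p \in I$ lies in some $I_\ell$, and $I_\ell$ is an ideal, so $p(\xx)h(\xx) \in I_\ell \subseteq I$ for every $h \in \mathbb{K}[\xx]$. The step that genuinely exploits the ascending hypothesis is closure under addition (condition ii): given $p \in I_a$ and $g \in I_b$, assume without loss of generality $a \leq b$, so the nesting $I_a \subseteq I_b$ places both elements in $I_b$, whence $p + g \in I_b \subseteq I$.

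Having established that $I$ is an ideal, I would invoke the Hilbert Basis Theorem to write $I = \langle g_1, \dots, g_k \rangle$ for finitely many generators. Each $g_i$ belongs to the union, hence to some member $I_{\ell_i}$ of the chain. Setting $N := \max_{1 \leq i \leq k} \ell_i$ and using the ascending property once more, all the $g_i$ simultaneously lie in $I_N$, so that $I = \langle g_1, \dots, g_k \rangle \subseteq I_N$. Finally, for any index $\ell \geq N$ one has the sandwich $I_N \subseteq I_\ell \subseteq I \subseteq I_N$, which forces $I_\ell = I_N$ and yields exactly the claimed stabilization.

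I expect the main (and essentially the only) subtlety to be the verification that the union $I$ is an ideal, specifically that it is closed under addition, since this is the single place where the ascending structure of the chain is indispensable; if the $I_\ell$ were not nested, their union need not be closed under sums and the argument would collapse. Everything after that point is a routine application of finite generation together with one further use of the nesting to locate all generators in a common member of the chain.
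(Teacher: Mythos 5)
Your proof is correct and follows exactly the route the paper intends: the paper states that the Ascending Chain Condition is ``easy to see'' from the Hilbert Basis Theorem and omits the details, and your argument (form the union, verify it is an ideal using the nesting for closure under addition, extract finitely many generators, and trap them all in a single $I_N$) is the standard derivation being alluded to. No gaps; the one subtlety you flag --- that additivity of the union relies on the chain being ascending --- is indeed the only point requiring care.
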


\subsection{Semi-algebraic Set}\label{subsec:semi-alg}

An atomic polynomial formula over variables $x_1,x_2,\ldots,x_n$ is $p\,\triangleright\,0$, where
$p$ is a polynomial in $\QQ[\xx]$ and $\triangleright\in \{\geq,>,\leq,<,=,\neq \}$. A quantifier free polynomial formula is a boolean combination of atomic polynomial formulas using connectives $\vee,\wedge,\neg,\rightarrow,$ etc.
\begin{definition}[Semi-algebraic Set]\label{dfn:sas}
A subset $S$ of $\RR^n$ is called a semi-algebraic set, if there is a quantifier free polynomial formula $\varphi$ s.t.
$$S=\{\xx \in \RR^n\mid \varphi(\xx) \,\textrm{ is true}\}\enspace.$$
\end{definition}
We will use the  $\mathcal S(\varphi)$ to denote the semi-algebraic set defined by a quantifier free polynomial formula $\varphi$. It is easy to check that any semi-algebraic set can be transformed into the form
$$\mathcal S (\bigvee_{i=1}^{I}\bigwedge_{j=1}^{J_i}p_{ij}\triangleright 0),\,\mbox{where }\, \triangleright \in \{\geq,>\}\enspace .$$

Note that semi-algebraic sets are closed under basic set operations, since
\begin{itemize}
\item $\mathcal S(\varphi_1)\cap\mathcal S(\varphi_2)=\mathcal S(\varphi_1\wedge \varphi_2)$\,;
\item $\mathcal S(\varphi_1)\cup\mathcal S(\varphi_2)=\mathcal S(\varphi_1\vee \varphi_2)$\,;
\item $\mathcal S(\varphi_1)^c=\mathcal S(\neg \varphi_1)$\,;
\item $\mathcal S(\varphi_1)\setminus\mathcal S(\varphi_2)=\mathcal S(\varphi_1)\cap\mathcal S(\varphi_2)^c=\mathcal S(\varphi_1 \wedge \neg\varphi_2)$\,,
\end{itemize}
where $A^c$ and $A\setminus B$ stand for the complement and subtraction operation of sets respectively.

\subsection{Continuous Dynamical System}\label{subsec:Lie}
We recall the theory of
continuous dynamical systems in the following. Please refer to \cite{Haddad-Che} for details.

\subsubsection{Trajectories of Continuous Dynamical System}
An autonomous \emph{continuous dynamical system} (CDS) is modeled by first-order ordinary differential equations
\begin{equation}\label{eq:ode}
  \dot \xx= \fb(\xx) \enspace,
\end{equation}
where $\xx\in\mathbb{R}^n$ and $\fb$ is a vector function from $\mathbb R^n$ to $\mathbb R^n$, which is also called a \emph{vector field} in $\mathbb R^n$.

If $\fb$ satisfies the \emph{local Lipschitz condition}, then given
$\xx_0\in \mathbb R^n$, there exists a unique solution $\xx(\xx_0;t)$ of
(\ref{eq:ode}) defined on $(a, b)$ with $a< 0< b$ s.t.
$$\forall t\in (a,b).\,{\ud \xx(\xx_0,t)\over \ud t} = \fb (\xx(\xx_0;t))\quad \mathrm{and}\quad  \xx(\xx_0;0)=\xx_0.$$
When $\xx_0$ is clear from the context, we just write $\xx(\xx_0;t)$ as $\xx(t)$. Based upon this, we shall use the
following useful notions for our discussion in the sequel.
\begin{definition}[Trajectory] Suppose $\xx(\xx_0;t)$ is the solution to (\ref{eq:ode}) defined on $(a,b)$ with $a< 0< b$, as stated above. Then
  \begin{itemize}
    \item $\xx(\xx_0;t)$ ($\xx(t)$ for short) defined on $[0,b)$ is called the trajectory of (\ref{eq:ode}) starting from $\xx_0$;
    \item $\xx(\xx_0;-t)$ ($\xx(-t)$ for short) defined on $[0,-a)$, resulted by substituting $-t$ for $t$ in $\xx(\xx_0;t)$, is called the inverse trajectory of (\ref{eq:ode}) starting from  $\xx_0$\,.
  \end{itemize}
\end{definition}

\subsubsection{Polynomial Vector Field and Lie Derivatives}
In this paper, we focus on vector fields defined by polynomials.
\begin{definition}[Polynomial Vector Field] \label{def:plvf}
Suppose $\fb=(f_1,f_2,\cdots,f_n)$  in (\ref{eq:ode}). If for all $1\leq i\leq n$,  $f_i$ is a polynomial in
$\mathbb Q[x_1,x_2,\ldots,x_n]$,  then $\fb$ is called a polynomial vector field, denoted by
$\fb\in \mathbb Q^n[\xx]$.
\end{definition}

Obviously polynomial vector fields satisfy the local  Lipschitz condition. Let $p$ be a polynomial in ring $\mathbb Q[\xx]$, which is a scalar function. Then the gradient of $p$ : $$ \frac{\partial}{\partial \xx} p \,\,\define\,\, (\frac{\partial p}{\partial x_1},\frac{\partial p}{\partial x_2},\cdots, \frac{\partial p}{\partial x_n})$$
is a vector of polynomials with dimension $\dim{(\xx)}$ . Thus the inner product of a polynomial vector field $\fb$ and the gradient of a polynomial $p$ is still a polynomial, if $\fb\in \mathbb Q^n[\xx]$ and
$\dim{(\xx)}=n$ (in the rest of the paper, this will be assumed implicitly). Therefore we can inductively define the {\itshape Lie derivatives} of $p$ along $\fb$, $L^k_{\fb} p: \mathbb R^n
\mapsto \mathbb R$, for $k \in \mathbb N$, as follows:
\begin{itemize}
\item $L^0_{\fb} p(\xx)=p(\xx)$,
\item $L^k_{\fb} p(\xx)=(\frac{\partial}{\partial \xx} L^{k- 1}_{\fb} p(\xx),
\fb(\xx)$), \,for $k>0$,
\end{itemize}
where $(\cdot, \cdot)$ is the inner product of two vectors, that is, $(\mathbf{a},\mathbf{b})=\sum_{i=1}^{n}a_ib_i$ for
$\mathbf{a}=(a_1,\ldots,a_n)$ and $\mathbf{b}=( b_1,\ldots, b_n )$.
\begin{example}\label{eg:Lie-derv}
Suppose $\fb=(-x,y)$ and $p(x,y)=x+y^2$,  then
\begin{displaymath}
\begin{array}{l}
L^0_{\fb} p=x+y^2\\
L^1_{\fb} p=-x+2y^2\\L^2_{\fb} p=x+4y^2\\
\cdots \quad\,\,\cdots
\end{array}
\end{displaymath}
\end{example}
For a parametric  polynomial $p(\uu,\xx)$, we can define the Lie derivatives of $p$ along $\fb$ similarly if the gradient of $p(\uu,\xx)$ is taken as $ \frac{\partial}{\partial \xx} p (\uu,\xx)$, and all $L^i_{\fb} p(\uu,\xx)$ are still parametric polynomials.

Given a polynomial vector field, we can make use of Lie derivatives to investigate the tendency of its trajectory in terms of a polynomial $p$ (as an energy function). To capture this, look at Example \ref{eg:Lie-derv} shown in I of Figure \ref{fig:intu-lie}.

In I of Figure \ref{fig:intu-lie}, arrow $B$ denote the corresponding evolution direction according to the vector field  $\fb=(-x,y)$, and we could imagine the points on the parabola $p(x,y)=x+y^2$ with  zero energy, and the points in white area have positive energy, i.e., $p(x,y)>0$. Arrow $A$ is the gradient  $\frac{\partial}{\partial \xx} p|_{(-1,1)}$ of $p(x,y)$, which infers that the trajectory starting at $(-1,1)$ will enter white area immediately if the angle, between $\frac{\partial}{\partial \xx} p|_{(-1,1)}$ and the evolution direction at $(-1,1)$, is less than $\frac{\pi}{2}$, that is, the 1-order Lie derivative is positive. Thus the 1-order Lie derivative $L^1_{\fb} p|_{(-1,1)}=3$ of $p$ along $\fb$ (the inner product of  $\frac{\partial}{\partial \xx} p|_{(-1,1)}$ and $\fb(x,y)|_{(-1,1)}$) predicts that there is some positive $d>0$ such that the trajectory starting at $(-1,1)$ (curve $C$) has the property $p(\xx((-1,1),t))>0$ for all $t\in (0,d)$.

However, if the angle between gradient and evolution direction is $\frac{\pi}{2}$ or the gradient is zero-vector, then 1-order Lie derivative is zero and it is impossible to predict trajectory tendency by means of 1-order Lie derivative.  In this case, we resort to nonzero higher order Lie derivatives. For this purpose, we introduce
the {\itshape pointwise rank} of $p$ with respect to $\fb$ as the function $\gamma_{p,\fb}: \mathbb R^n \mapsto  \mathbb N\cup \{\infty\}$ defined by
\[\gamma_{p,\fb}(\xx)=\min\{k\in \mathbb N\mid L^k_{\fb}p(\xx)\neq 0\},\]
if such $k$ exists, otherwise $\gamma_{p,\fb}(\xx)=\infty$.

\begin{example}\label{eg:rank}
Let \, $\fb(x, y)=(\dot{x}=-2y, \dot{y}=x^2)$ and
$h(x, y)=x+y^2$,  then
\begin{eqnarray*}
L^0_{\fb} h(x, y)& = & x+y^2 \\
L^1_{\fb} h(x, y)& = & -2y+2x^2y \\
L^2_{\fb} h(x, y)& = & -8y^2x-(2-2x^2)x^2 \\
 & \vdots &
\end{eqnarray*}
Here, $\gamma_{h,\fb}(0, 0)=\infty$, $\gamma_{h,\fb}(-4, 2)=1$, etc.

Look at  II of Figure \ref{fig:intu-lie}.  At point $(-1, 1)$ on curve $h(x,y)=0$, the gradient of $h$ is $(1, 2)$ (arrow $A$) and the evolution direction is $(-2,1)$ (arrow B), so their inner product is zero. Thus it is impossible to predict the tendency (in terms of curve $h(x,y)=0$) of trajectory starting from $(-1, 1)$ via its 1-order Lie derivative. By a simple computation, its 2-order Lie derivative is $8$. Hence $\gamma_{h,\fb}(-1, 1)=2$. In the sequel, we shall show how to use such high order Lie derivatives to analyze the trajectory tendency.

\end{example}

\begin{figure}[tp]
\begin{center}
\includegraphics[width=1.5in,height=1.5in]{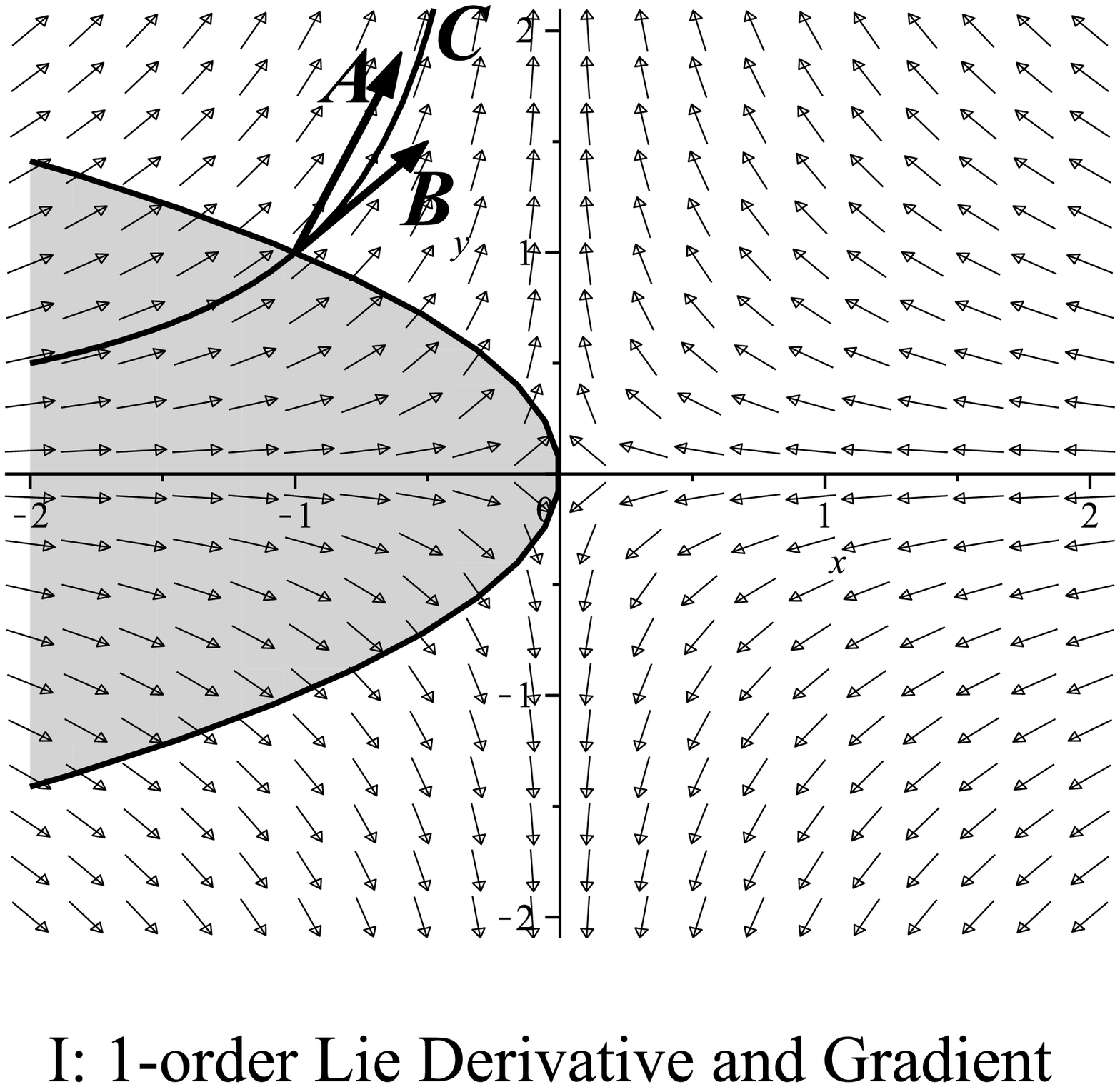}
\includegraphics[width=1.5in,height=1.5in]{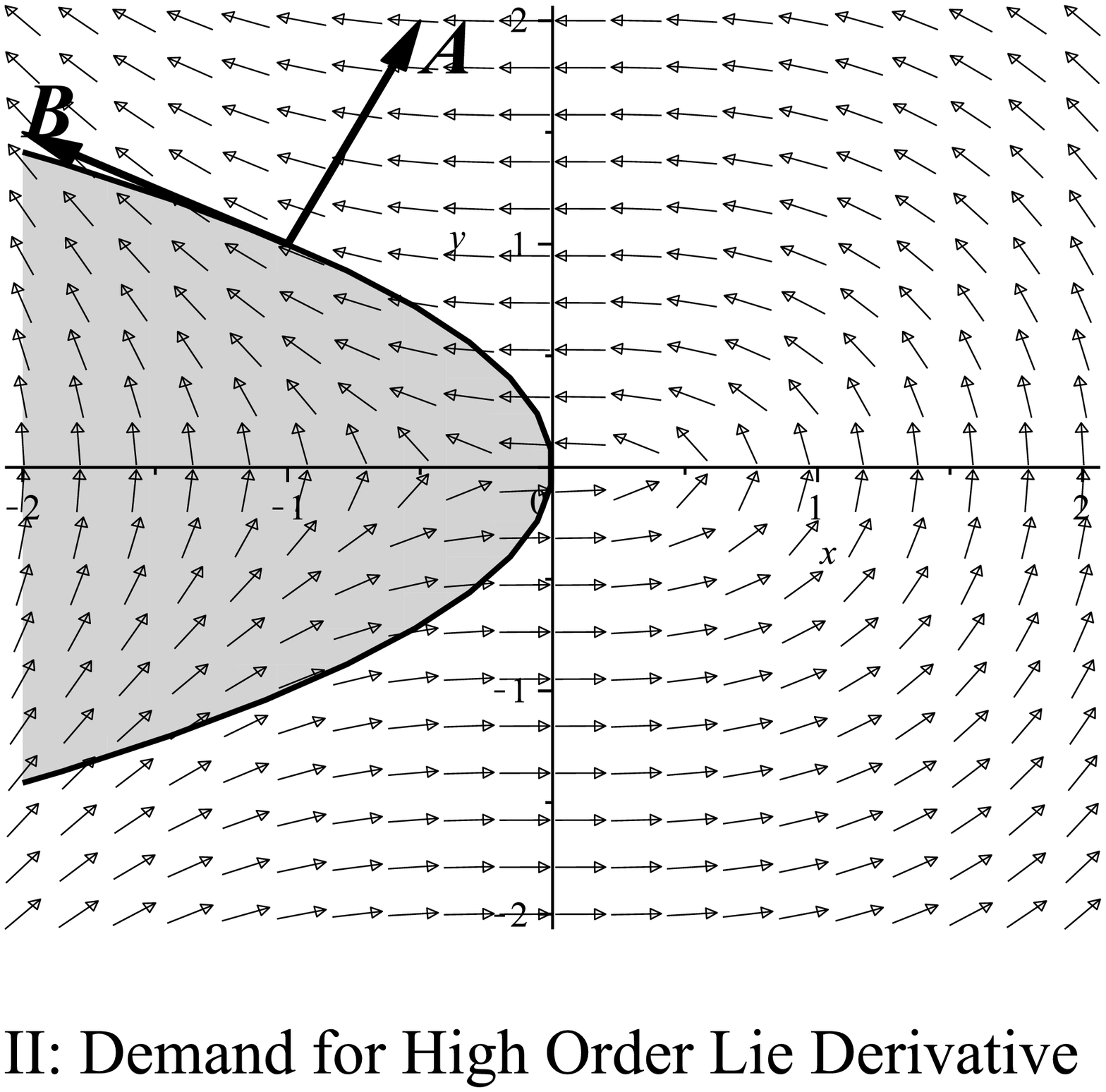}
\end{center}
\caption{Lie Derivatives}
\label{fig:intu-lie}
\end{figure}

For analyzing trajectory tendency by high order Lie derivatives, we need the following fact.
\begin{proposition}\label{Trans} Given polynomial functions $p$ and $\fb$, then $\xx_0$ is on the boundary $\mathcal S(p(\xx)=0)$ iff $\gamma_{p,\fb}(\xx_0) \neq 0$. Suppose $\xx_0=\xx(0)$, then it follows that
\begin{enumerate}
\item[(a)] if $\gamma_{p,\fb}(\xx_0)< \infty$ and $L^{\gamma_{p,\fb}(\xx_0)}_{\fb} p(\xx_0)>0$, then
 \[\exists \epsilon >0,\forall t\in
(0,\epsilon). p(\xx(t))> 0;\]

\item[(b)] if $\gamma_{p,\fb}(\xx_0)< \infty$ and $L^{\gamma_{p,\fb}(\xx_0)}_{\fb} p(\xx_0)<0$, then
\[\exists \epsilon >0, \forall t\in (0, \epsilon). p(\xx(t))< 0;\]

\item[(c)] if $\gamma_{p,\fb}(\xx_0)= \infty$, then
\[\exists \epsilon >0, \forall t\in (0, \epsilon). p(\xx(t))=0.\]

\end{enumerate}
\end{proposition}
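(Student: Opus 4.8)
The plan is to transfer the whole question to the one-variable function $g(t)\define p(\xx(t))$ obtained by restricting $p$ to the trajectory, and to read off the local sign of $g$ near $t=0$ from its derivatives there. The claimed equivalence is immediate from the definition of pointwise rank: $\gamma_{p,\fb}(\xx_0)\neq 0$ means exactly $L^0_{\fb}p(\xx_0)=p(\xx_0)=0$, i.e. $\xx_0\in\mathcal S(p(\xx)=0)$. The computational heart of the whole statement is the identity
$$\frac{\ud^k}{\ud t^k}\,p(\xx(t))=L^k_{\fb}p(\xx(t)),\qquad k\in\mathbb N,$$
which I would prove by induction on $k$: the base case is $L^0_{\fb}p=p$, and the inductive step is the chain rule together with $\dot\xx=\fb(\xx)$, giving $\frac{\ud}{\ud t}L^{k-1}_{\fb}p(\xx(t))=(\frac{\partial}{\partial\xx}L^{k-1}_{\fb}p(\xx(t)),\fb(\xx(t)))=L^k_{\fb}p(\xx(t))$. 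Evaluating at $t=0$ yields $g^{(k)}(0)=L^k_{\fb}p(\xx_0)$, so $\gamma_{p,\fb}(\xx_0)$ is precisely the order of the first nonvanishing derivative of $g$ at the origin.

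For (a) and (b) put $m=\gamma_{p,\fb}(\xx_0)<\infty$, so $g(0)=\cdots=g^{(m-1)}(0)=0$ while $g^{(m)}(0)=L^m_{\fb}p(\xx_0)\neq 0$. Since $\fb$ and $p$ are polynomial, the flow $\xx(t)$ is smooth and hence so is $g$; Taylor's theorem then gives $g(t)=\frac{g^{(m)}(\xi_t)}{m!}\,t^m$ for some $\xi_t$ between $0$ and $t$. As $g^{(m)}$ is continuous and nonzero at $0$, for all sufficiently small $t>0$ the factor $g^{(m)}(\xi_t)$ retains the sign of $L^m_{\fb}p(\xx_0)$ while $t^m>0$; hence $p(\xx(t))=g(t)$ carries the sign of $L^m_{\fb}p(\xx_0)$ on some interval $(0,\epsilon)$, which is (a) when this value is positive and (b) when it is negative.

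Case (c) is the one that needs a genuinely new idea, because $g^{(k)}(0)=0$ for all $k$ does not by itself force a smooth $g$ to vanish near $0$. Here I would exploit the Ascending Chain Condition (Theorem \ref{ACC}). The chain $\la L^0_{\fb}p\ra\subseteq\la L^0_{\fb}p,L^1_{\fb}p\ra\subseteq\cdots$ of ideals in $\QQ[\xx]$ stabilizes, so some $L^{N+1}_{\fb}p=\sum_{i=0}^{N}g_i\,L^i_{\fb}p$ with $g_i\in\QQ[\xx]$. Setting $q_i(t)=L^i_{\fb}p(\xx(t))$ for $0\le i\le N$, the identity above gives $\dot q_i=q_{i+1}$ for $i<N$ and $\dot q_N=\sum_{i=0}^N g_i(\xx(t))\,q_i$, so $(q_0,\dots,q_N)$ solves a linear ODE system with continuous coefficients $g_i(\xx(t))$. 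Since $\gamma_{p,\fb}(\xx_0)=\infty$ forces $q_i(0)=L^i_{\fb}p(\xx_0)=0$ for every $i$, uniqueness of solutions of a linear system makes $(q_0,\dots,q_N)$ the zero solution; in particular $p(\xx(t))=q_0(t)\equiv 0$, which is even stronger than (c).

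I expect case (c) to be the main obstacle: one must avoid the false shortcut from vanishing Taylor coefficients and instead produce the finite invariant linear system, for which the Hilbert Basis Theorem and the ACC set up in the preliminaries are exactly the right tool (alternatively one could invoke real-analyticity of the polynomial flow and the identity theorem for analytic functions). What remains is routine: verifying that the coefficients $g_i(\xx(t))$ are continuous so that the standard existence-uniqueness theorem applies on a compact subinterval containing $0$, and packaging the local sign information of (a)--(b) into the quantifier $\exists\,\epsilon>0$.
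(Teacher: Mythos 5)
Your proof is correct, but it takes a genuinely different route from the paper's, especially in case (c). The paper's entire argument is: polynomial vector fields have real-analytic flows, so $p(\xx(t))$ is analytic near $t=0$ and \emph{equals} its convergent Taylor series $\sum_k L^k_{\fb}p(\xx_0)\,t^k/k!$; all three cases then follow by inspecting the first nonzero coefficient, with (c) handled by the fact that an analytic function whose Taylor coefficients all vanish is locally identically zero. You deliberately avoid analyticity: for (a) and (b) you use Taylor's theorem with Lagrange remainder, which needs only finite smoothness of the flow and is therefore more elementary (and would extend verbatim to smooth non-polynomial vector fields); for (c) you correctly diagnose that vanishing derivatives alone prove nothing for a merely smooth function, and instead invoke the Ascending Chain Condition to close the Lie-derivative chain into a finite linear system $\dot q_i=q_{i+1}$ for $i<N$, $\dot q_N=\sum_{i\le N} g_i(\xx(t))\,q_i$, which uniqueness from zero initial data forces to vanish identically. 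That argument is sound and complete, and it is in fact the same Noetherian device the paper deploys later to prove the Rank Theorem (Theorem \ref{Rank}), so your case (c) anticipates machinery the paper reserves for a different statement. What the paper's route buys is brevity and uniformity across the three cases; what yours buys is independence from the (nontrivial, merely cited) analyticity of solutions of analytic ODEs, at the cost of a longer case (c). Both are valid proofs.
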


\begin{proof}
Polynomial functions are analytic, so $\fb$ is analytic and thus
$\xx(t)$ is analytic in a small interval $(a,b)$ containing zero \cite{Pollard}. Besides, $p$ is analytic, so the Taylor expansion of $p(\xx(t))$ at $t=0$
\begin{eqnarray}
p(\xx(t))& = &p(\xx_0)+\frac{\ud p}{\ud t}\cdot t + \frac{\ud^{2} p}{\ud t^{2}}\cdot\frac{t^{2}}{2!}+\cdots \nonumber\\
& = & L_{\fb}^0 p(\xx_0)+L_{\fb}^1 p(\xx_0)\cdot t + L_{\fb}^2 p(\xx_0)\cdot \frac{t^{2}}{2!}+\cdots\quad\label{eqn:taylor}
\end{eqnarray}
converges in another small interval $(a',b')$ containing zero \cite{Krantz}. Then the conclusion of Proposition \ref{Trans} follows immediately from formula (\ref{eqn:taylor})  by case analysis on the sign of $L_{\fb}^{\gamma_{p,\fb}(\xx_0)}p(\xx_0)$.
\end{proof}

Based on this proposition, we introduce the notion of {\itshape transverse set} to indicate the tendency of the trajectories of a considered polynomial vector field in terms of the first nonzero Lie derivative of a underlying polynomial as follows.
\begin{definition}\label{Trans-def}
Given a polynomial $p$ and a polynomial vector field $\dot{\xx}=\fb(\xx)$, the transverse set of $\fb$ over the domain $\mathcal S(p(\xx)\geq 0)$ is
\[Trans_{\fb\uparrow p} \define  \{\xx\in \mathbb R^n\mid \gamma_{p,\fb}(\xx)<\infty \, \wedge \,L^{\gamma_{p,\fb}(\xx)}_{\fb}
p(\xx)<0 \}.\]
\end{definition}

Intuitively, if $\xx\in Trans_{\fb\uparrow p}$, then either $\xx$ is  not in  $\mathcal S(p(\xx)\geq 0)$ or $\xx$ is on the boundary of  $\mathcal S(p(\xx)\geq 0)$ such that the trajectory $\xx(t)$ starting with $\xx$ will exit $\mathcal S(p(\xx)\geq 0)$ immediately.

\section{Semi-algebraic Invariant}\label{sec:di}
A hybrid system consists of a set of CDSs, a set of jumps between these CDSs, and a set of initial states. The CDSs in a hybrid system are a little different from the standard ones, as normally they are equipped with a domain and a set of initial states, in the form $(H,\fb,\Xi)$, where $H$ is used to force some jumps outgoing the mode to happen, that is, a hybrid system can stay within a mode only if the domain of the current mode holds, and $\Xi$ is a subset of $H$, standing for the set of initial states. Obviously, a CDS can be seen as a  special  CDSwDI by letting $H=\RR^n$.
The goal of this paper is to present a complete method for automatically discovering SAIs of PDSs, based on which, as we discussed in the introduction, we can finally verify polynomial hybrid systems.

\subsection{Continuous Invariants of CDSwDI}
The notion of  \emph{continuous invariant}
of {CDSwDI} is quite similar to the one of positive invariant set of CDS \cite{Blanchini}.
Informally, a continuous invariant $P$ of a CDwDI $(H,\fb,\Xi)$ is a superset of $\Xi$ such that all continuous evolutions starting from $\Xi$ keep within $P$ if they are within $H$. Here, we give a formal definition of CI adapted from \cite{PlatzerClarke08} as follows:
\begin{definition}[Continuous Invariant \cite{PlatzerClarke08}]\label{dfn:inv} Given a CDSwDI $(H,\fb,\Xi)$ with $\Xi\subseteq H\subseteq \RR^n$ and  $\fb:\RR^n\mapsto\RR^n$ that is local Lipschitz continuous, a set $P\subseteq\RR^n$ is called a \emph{continuous invariant} of $(H,\fb, \Xi)$, iff
\begin{enumerate}
\item $\Xi \rightarrow P$; and
 \item for all $\xx_0\in P$, and for any $T\geq 0$,
 $$ (\forall t \in [0,T]. \xx(\xx_0;t)\in H) \rightarrow (\forall t \in [0,T]. \xx(\xx_0;t)\in P).$$
 \end{enumerate}
\end{definition}

Regarding Definition \ref{dfn:inv}, we would like to give the following remarks.

\begin{enumerate}
  \item Continuous invariant in Definition \ref{dfn:inv} is more general than standard positive invariant set of continuous dynamical systems. However, if $H=\RR^n$ and $\Xi=P$, then the two notions coincide.

\item One may have noticed that in Definition \ref{dfn:inv}, a continuous invariant set $P$ is not necessarily a subset of domain $H$. In fact, any $P$ satisfying  $H\subseteq P$ is continuous invariant of $(H,\fb,\Xi)$. This seems weird at first sight, because such continuous invariant sets are useless if we only concern the CDSwDI in isolation.
     But it would be quite useful in the verification of the hybrid system if we  assume that the continuous invariant of a mode always holds if the hybrid system does not stay within the mode.
\end{enumerate}

\subsection{PDS and SAI}

\begin{definition}\label{dfn:pds}
A CDSwDI $(H,\fb,\Xi)$ is called a polynomial dynamical system with semi-algebraic domain and initial states (PDS), if $H$ and $\Xi$ are semi-algebraic sets  and $\fb$ is a polynomial vector field in  $\QQ^n[\xx]$.

A continuous invariant of a PDS is called a \emph{semi-algebraic invariant} (SAI) if it is a semi-algebraic set.
\end{definition}

In the subsequent sections, we will present a sound and complete method to automatically discover SAIs for a PDS.

\section{Fundamental Results}\label{sec:fund}
The set $Trans_{\fb\uparrow p}$ in Definition \ref{Trans-def} plays a crucial role in our theory. First of all, we have
\begin{theorem}\label{Comp-Trans}
The set $Trans_{\fb\uparrow p}$ is a semi-algebraic set if $p$ is a  polynomial and $\fb$ is a polynomial vector field, and hence it is computable.
\end{theorem}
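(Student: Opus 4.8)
The plan is to exhibit $Trans_{\fb\uparrow p}$ as an infinite union of semi-algebraic sets indexed by the pointwise rank, and then to collapse that union to a finite one using the Ascending Chain Condition. By Definition \ref{Trans-def}, $\xx\in Trans_{\fb\uparrow p}$ holds exactly when there is some $k\in\mathbb N$ with $L^0_{\fb}p(\xx)=\cdots=L^{k-1}_{\fb}p(\xx)=0$ and $L^k_{\fb}p(\xx)<0$. Setting
$$T_k \;\define\; \mathcal S\left( \bigwedge_{i=0}^{k-1} (L^i_{\fb} p = 0) \,\wedge\, L^k_{\fb} p < 0 \right),$$
each $T_k$ is plainly semi-algebraic (a conjunction of polynomial conditions), and $Trans_{\fb\uparrow p}=\bigcup_{k=0}^{\infty}T_k$. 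The obstacle is that this union is infinite, whereas semi-algebraic sets are closed only under \emph{finite} unions (Section \ref{subsec:semi-alg}).

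To truncate the union I would invoke the Ascending Chain Condition (Theorem \ref{ACC}) on the chain of ideals $I_k\define\<L^0_{\fb}p,L^1_{\fb}p,\ldots,L^k_{\fb}p\rg$ in $\QQ[\xx]$: there is an index $N$ with $I_\ell=I_N$ for all $\ell\geq N$. I claim the finite rank is then bounded by $N$. Indeed, suppose $\gamma_{p,\fb}(\xx)>N$, i.e. $L^0_{\fb}p(\xx)=\cdots=L^N_{\fb}p(\xx)=0$. For every $\ell>N$ the membership $L^\ell_{\fb}p\in I_N$ gives polynomials $h_i$ with $L^\ell_{\fb}p=\sum_{i=0}^{N}h_i\,L^i_{\fb}p$, so evaluating at $\xx$ kills every term and $L^\ell_{\fb}p(\xx)=0$. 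Thus all Lie derivatives vanish at $\xx$ and $\gamma_{p,\fb}(\xx)=\infty$. Contrapositively, $\gamma_{p,\fb}(\xx)<\infty$ forces $\gamma_{p,\fb}(\xx)\leq N$, whence $Trans_{\fb\uparrow p}=\bigcup_{k=0}^{N}T_k$, a finite union of semi-algebraic sets and therefore semi-algebraic.

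For the computability claim one must in addition produce $N$ effectively. The natural algorithm computes $L^0_{\fb}p,L^1_{\fb}p,\ldots$ in turn and, using a Gr\"obner basis of $I_k$, tests by ideal membership whether $L^{k+1}_{\fb}p\in I_k$, halting at the first $k$ for which this holds; termination is guaranteed by Theorem \ref{ACC}. Its correctness, however, rests on the fact that the \emph{first} stall really is a permanent one, which does not follow from the chain condition in the abstract. Here I would use that $L_{\fb}$ is a derivation (recalling $L^{k+1}_{\fb}p=L_{\fb}(L^k_{\fb}p)$): from $L^{N+1}_{\fb}p=\sum_{i=0}^{N}h_i\,L^i_{\fb}p$ the Leibniz rule $L_{\fb}(fg)=g\,L_{\fb}f+f\,L_{\fb}g$ gives
$$L^{N+2}_{\fb} p \;=\; \sum_{i=0}^{N} \big( (L_{\fb} h_i)\, L^i_{\fb} p + h_i\, L^{i+1}_{\fb} p \big),$$
and every summand lies in $I_N$ (the last one because $L^{N+1}_{\fb}p\in I_N$); by induction $L^\ell_{\fb}p\in I_N$ for all $\ell>N$, so $I_{N+1}=I_N$ already implies $I_\ell=I_N$ for all $\ell\geq N$. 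Hence the algorithm returns a genuine stabilization index and the explicit formula $\bigvee_{k=0}^{N}\big(\bigwedge_{i<k}L^i_{\fb}p=0\wedge L^k_{\fb}p<0\big)$ for $Trans_{\fb\uparrow p}$ is computable.

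I expect this derivation argument to be the subtle point. The Ascending Chain Condition alone yields only the \emph{existence} of a stabilization index, which already suffices for the semi-algebraicity of $Trans_{\fb\uparrow p}$; but to make $N$ — and hence the defining formula — effectively computable one must rule out a later resumption of the chain's growth, and this is exactly where the Leibniz property of the Lie derivative operator is indispensable.
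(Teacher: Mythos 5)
Your proof is correct and follows essentially the same route as the paper: you bound the finite pointwise ranks via the Ascending Chain Condition (the paper's Rank Theorem, Theorem~\ref{Rank}), use the Leibniz/derivation property of $L_{\fb}$ to show that the first ideal stall is permanent and hence that the bound $N$ is effectively computable (the paper's Fixed Point Theorem, Theorem~\ref{FP}), and then express $Trans_{\fb\uparrow p}$ as the finite disjunction $\bigvee_{k=0}^{N}\bigl(\bigwedge_{i<k}L^i_{\fb}p=0\wedge L^k_{\fb}p<0\bigr)$, exactly as in the paper's formula $\pi(p,\fb,\xx)$. You also correctly isolate the one subtle point, namely that ACC alone gives existence but not effectiveness of $N$, which is precisely why the paper states the Fixed Point Theorem separately.
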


To prove this theorem, it suffices to show $\gamma_{p,\fb}(\xx)$ is computable for each $\xx\in  \mathcal S(p(\xx)\geq 0)$. However, $\gamma_{p,\fb}(\xx)$ may be infinite for some $\xx\in \mathcal S(p(\xx)\geq 0)$. Thus, it seems that we have to compute $L^{k}_{\fb} p(\xx)$ infinite times for such $\xx$ to determine if $\xx\in Trans_{\fb\uparrow p}$. Fortunately, we can find a uniform upper bound on $\gamma_{p,\fb}(\xx)$ for all $\xx$ with $\gamma_{p,\fb}(\xx)$ being finite.

\begin{theorem}[Rank Theorem]\label{Rank}
If $p$ and $\fb$ are polynomial functions, then there is an integer $N$ such that for all $\xx\in \mathbb R^n$, $\gamma_{p,\fb}(\xx)<\infty$ implies  $\gamma_{p,\fb}(\xx)\leq N$. Later on, such an $N$ is called the {\itshape rank} of $p$ and $\fb$, denoted by $\gamma_{p, \fb}$.
\end{theorem}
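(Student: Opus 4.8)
The plan is to prove the Rank Theorem by exploiting the Ascending Chain Condition (Theorem~\ref{ACC}) on a carefully chosen ascending chain of ideals built from the iterated Lie derivatives of $p$ along $\fb$. The key observation is that $\gamma_{p,\fb}(\xx)$, the pointwise rank, measures the index of the first nonvanishing Lie derivative at $\xx$, so controlling it uniformly amounts to controlling when the collection $\{L^0_{\fb}p, L^1_{\fb}p, \dots, L^k_{\fb}p\}$ stops ``gaining new common zeros.'' Concretely, I would define the chain of ideals
\begin{equation}\label{eqn:chain}
J_0\subseteq J_1 \subseteq \cdots \subseteq J_k \subseteq \cdots,\quad\text{where } J_k \define \<L^0_{\fb}p,\,L^1_{\fb}p,\,\ldots,\,L^k_{\fb}p\rg
\end{equation}
in $\QQ[\xx]$. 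Each $J_k$ is an ideal, and the inclusions hold by construction, so by Theorem~\ref{ACC} there exists $N$ with $J_\ell = J_N$ for all $\ell \geq N$. I claim this $N$ is the desired uniform bound, which is the content of the theorem.

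The heart of the argument is to show that once the ideal chain stabilizes, so does the ability of higher Lie derivatives to vanish at a point where all lower ones vanish. Suppose $\xx_0$ satisfies $\gamma_{p,\fb}(\xx_0)<\infty$ but, for contradiction, $\gamma_{p,\fb}(\xx_0) > N$. Then by definition of the pointwise rank we have $L^0_{\fb}p(\xx_0)=L^1_{\fb}p(\xx_0)=\cdots=L^N_{\fb}p(\xx_0)=0$, so $\xx_0$ lies in the common zero set $\mathbf{V}(J_N)$ of the generators of $J_N$. Now since $J_{N+1}=J_N$, the polynomial $L^{N+1}_{\fb}p$ belongs to $J_N=\<L^0_{\fb}p,\ldots,L^N_{\fb}p\rg$, so it can be written as $L^{N+1}_{\fb}p = \sum_{i=0}^{N} h_i\, L^i_{\fb}p$ for some $h_i\in\QQ[\xx]$. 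Evaluating at $\xx_0$, where every $L^i_{\fb}p(\xx_0)=0$ for $i\leq N$, forces $L^{N+1}_{\fb}p(\xx_0)=0$. The same reasoning propagates upward: because the chain is stable beyond $N$, every $L^\ell_{\fb}p$ with $\ell>N$ lies in $J_N$ and hence vanishes at $\xx_0$. Thus $L^k_{\fb}p(\xx_0)=0$ for all $k\in\mathbb N$, contradicting $\gamma_{p,\fb}(\xx_0)<\infty$. Therefore $\gamma_{p,\fb}(\xx_0)\leq N$, and taking this $N$ as the rank completes the proof.

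I expect the main subtlety to be the upward-propagation step: one must verify that membership $L^{N+1}_{\fb}p\in J_N$ genuinely forces all subsequent derivatives into $J_N$ as well, not merely the single next one. The cleanest way is to establish by induction on $\ell\geq N$ that $J_\ell=J_N$ \emph{together with} the fact that differentiation respects the ideal structure; in fact, since $J_{N+1}=J_N$ implies $L^{N+1}_{\fb}p\in J_N$, one shows $L^{N+2}_{\fb}p\in J_{N+1}=J_N$ and so on, so that stabilization at index $N$ propagates by the ACC guarantee $J_\ell=J_N$ for \emph{all} $\ell\geq N$ rather than requiring a separate derivative-closure lemma. The only remaining care is that these are ideals over $\QQ[\xx]$ but points $\xx_0$ range over $\RR^n$; since a polynomial identity $L^{N+1}_{\fb}p=\sum h_i L^i_{\fb}p$ holding in $\QQ[\xx]$ holds at every real point, the evaluation argument goes through unchanged. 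This $N$ is purely a property of $p$ and $\fb$ and does not depend on $\xx_0$, which is exactly the uniformity asserted by the theorem.
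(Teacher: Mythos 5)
Your proof is correct and follows essentially the same route as the paper's: both apply the Ascending Chain Condition to the chain of ideals generated by the successive Lie derivatives, obtain a stabilization index $N$, write each higher Lie derivative as a polynomial combination of $L^0_{\fb}p,\dots,L^N_{\fb}p$, and evaluate at the point to conclude that vanishing of the first $N+1$ derivatives forces vanishing of all of them. Your closing remarks correctly identify that the ACC already yields $J_\ell=J_N$ for \emph{all} $\ell\geq N$, so no separate derivative-closure argument is needed here (that extra work is only required for the paper's Fixed Point Theorem, which starts from a single-step ideal membership rather than from the ACC).
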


\begin{proof}
Let $D_l=\{\xx\mid \forall m<l. L^m_{\fb} p(\xx)=0\}$ for $l\geq 0$. Note that the sequence $\{D_l\}_{l\in \mathbb N}$ is decreasing. We will show that there is an $N$ such that $D_l=D_N$ for all $l\geq N$.

Since $p$ and $\fb$ are polynomial functions, all $L^m_{\fb}p(\xx)$ must be polynomials for any $m\in \mathbb N$. We consider the polynomial ideal $I$ generated by $\{L^m_{\fb}p(\xx)\mid m\in \mathbb N\}$. Let $I_m=\< L^0_{\fb}p(\xx), L^1_{\fb}p(\xx), \cdots, L^m_{\fb}p(\xx)\rg$. Then $I=\cup_m I_n$. By Theorem \ref{ACC}, there is $k$ such that $I=I_k$. Thus for all $l>k$, there are $g_i\in \mathbb R[x_1,  \cdots, x_n]$ for $i\leq k$ such that $L^l_{\fb}p(\xx)=\sum_{i\leq k}g_i L^i_{\fb}p(\xx)$ for all $\xx\in \mathbb R^n$.

Fix $l>k$. If $\xx\in D_l$, then $L^l_{\fb}p(\xx)=\sum_{i\leq k}g_i L^i_{\fb}p(\xx)=0$ since all $L^i_{\fb}p(\xx)=0$ for $i\leq k$ as $\xx\in D_l$. Let $N=k+1$. Then  $D_l=D_N$ for all $l\geq N$. Thus, if $\xx\in D_N$ then $\gamma_{p,\fb}(\xx)=\infty$. Therefore, $\gamma_{p,\fb}(\xx)<\infty$ implies $\gamma_{p,\fb}(\xx)\leq N$.
\end{proof}

Now, it suffices to compute the values
\[L^{0}_{\fb}p(\xx_0), L^{1}_{\fb}p(\xx_0) \cdots, L^{\gamma_{p, \fb}}_{\fb}p(\xx_0)\]
to determine whether $\gamma_{p, \fb}(\xx_0)$ is infinite. Therefore if $\gamma_{p, \fb}$ is computable then $Trans_{\fb\uparrow p}$ is computable too. It is desirable to get an expression of $\gamma_{p, \fb}$ for given $p$ and $\fb$. However, we did not find it yet. Nevertheless, a computable upper bound for $\gamma_{p, \fb}$ can indeed be found effectively according to the following theorem.
\begin{theorem}[Fixed Point Theorem]\label{FP}
If \[L^{i+1}_{\fb}p\in \<L^{0}_{\fb}p, L^{1}_{\fb}p,  \cdots, L^{i}_{\fb}p\rg,\]
then $L^{m}_{\fb}p\in \<L^{0}_{\fb}p, L^{1}_{\fb}p, \cdots, L^{i}_{\fb}p\rg$, for all $m>i$.
\end{theorem}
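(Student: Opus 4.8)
The plan is to prove this by induction on $m > i$. The hypothesis gives us the base case directly: $L^{i+1}_{\fb}p \in \langle L^0_{\fb}p, L^1_{\fb}p, \cdots, L^i_{\fb}p\rangle$, which I will abbreviate as $I_i$. So I only need to carry out the inductive step: assuming $L^m_{\fb}p \in I_i$ for some $m > i$, I want to conclude $L^{m+1}_{\fb}p \in I_i$.

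The key observation driving the inductive step is that the Lie derivative operator $L_{\fb}$ interacts well with ideal membership, but \emph{not} by simply fixing generators. If $L^m_{\fb}p = \sum_{j=0}^{i} g_j L^j_{\fb}p$ for some $g_j \in \mathbb{Q}[\xx]$, then applying $L_{\fb}$ to both sides and using that $L_{\fb}$ satisfies a Leibniz (product) rule, I get
\[
L^{m+1}_{\fb}p = L_{\fb}\Big(\sum_{j=0}^{i} g_j L^j_{\fb}p\Big) = \sum_{j=0}^{i}\big( (L_{\fb}g_j)\, L^j_{\fb}p + g_j\, L^{j+1}_{\fb}p\big).
\]
The first group of terms, $(L_{\fb}g_j) L^j_{\fb}p$ for $j \leq i$, clearly lie in $I_i$ since each is a polynomial multiple of a generator $L^j_{\fb}p$. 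Among the second group, the terms $g_j L^{j+1}_{\fb}p$ for $j < i$ also lie in $I_i$, since $L^{j+1}_{\fb}p$ is a generator when $j+1 \leq i$. The only problematic term is $g_i L^{i+1}_{\fb}p$, where the index $i+1$ exceeds the generating range.

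The main obstacle is precisely handling this leftover term $g_i L^{i+1}_{\fb}p$, and here is where the hypothesis of the theorem is used a second time. Since $L^{i+1}_{\fb}p \in I_i$ by assumption, the product $g_i L^{i+1}_{\fb}p$ is again an element of $I_i$ (ideals are closed under multiplication by arbitrary ring elements, property iii of the ideal definition). Thus every term on the right is in $I_i$, and since ideals are closed under addition (property ii), the whole sum $L^{m+1}_{\fb}p$ lies in $I_i$, completing the induction.

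The only point requiring care is verifying that $L_{\fb}$ genuinely obeys the product rule $L_{\fb}(gh) = (L_{\fb}g)h + g(L_{\fb}h)$. This follows directly from the definition $L_{\fb}q = (\frac{\partial}{\partial \xx}q, \fb)$ together with the ordinary product rule for partial derivatives, since $\frac{\partial}{\partial \xx}(gh) = (\frac{\partial}{\partial \xx}g)h + g(\frac{\partial}{\partial \xx}h)$ and the inner product with $\fb$ is linear. I would state this Leibniz property as a short preliminary observation before launching the induction, so that the algebraic manipulation above is justified cleanly.
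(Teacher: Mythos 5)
Your proof is correct and follows essentially the same route as the paper's: induction on $m$, applying the Leibniz rule for $L_{\fb}$ to the representation $L^m_{\fb}p=\sum_{j\leq i}g_jL^j_{\fb}p$, and invoking the theorem's hypothesis to absorb the one leftover term $g_iL^{i+1}_{\fb}p$ into the ideal. In fact your write-up is slightly cleaner at that last step, where the paper's displayed computation contains a small typo ($g_iL^i_{\fb}p$ in place of $g_iL^{i+1}_{\fb}p$).
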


\begin{proof}
We prove this theorem by induction. Assume this conclusion is true for all $l\leq k$ with $k> i$. Especially, $L^{k}_{\fb}p\in \<L^{0}_{\fb}p, L^{1}_{\fb}p, \cdots, L^{i}_{\fb}p\rg$. Then there are $g_j\in \mathbb R[x_1, \cdots, x_n]$ for $j\leq i$ such that \begin{equation}\label{Induction}
L^{k}_{\fb}p=\sum_{j\leq  i} g_jL^{j}_{\fb}p.
\end{equation}
By the definition of Lie derivative and equation (\ref{Induction}), it follows that
\begin{align}
& L^{k+1}_{\fb} p \nonumber \\
~&~=~(\frac{\partial}{\partial \xx}L^k_{\fb}p, \fb) \nonumber\\
~&~=~(\frac{\partial}{\partial \xx}(\sum_{j\leq i} g_jL^{j}_{\fb}p), \fb) \nonumber\\
~&~=~\sum_{j\leq i}(L^{j}_{\fb} p \frac{\partial}{\partial
\xx}g_j,\fb)+
 \sum_{j\leq i}(g_j \frac{\partial}{\partial \xx}L^{j}_{\fb}p, \fb)
 \nonumber \\
~&~=~\sum_{j\leq i}(\frac{\partial}{\partial \xx}g_j, \fb)L^{j}_{\fb} p+\sum_{j\leq i}g_jL^{j+1}_{\fb}p\nonumber \\
~&~=~\sum_{j\leq i}(\frac{\partial}{\partial \xx}g_j, \fb)L^{j}_{\fb}
p+\sum_{j< i}g_jL^{j+1}_{\fb}p+g_{i}L^i_{\fb}p. \nonumber
\end{align}
By induction hypothesis, $L^i_{\fb}p$ is in $\<L^{0}_{\fb}p,
L^{1}_{\fb}p, \cdots, L^{i}_{\fb}p\rg$. So
$$L^{k+1}_{\fb}p\in\<L^{0}_{\fb}p, L^{1}_{\fb}p, \cdots,
L^{i}_{\fb}p\rg.$$ By induction, the theorem follows immediately.
\end{proof}

Let $N_{p, \fb}$ be the minimal $i$ satisfying the condition of Theorem~\ref{FP} in the sequel. Then  $\gamma_{p, \fb}\leq N_{p, \fb}$.  Look at Example \ref{eg:rank}, where $N_{h, \fb}=2$. Now, applying above two theorems we can prove Theorem \ref{Comp-Trans}.

\begin{proof}[of Theorem \ref{Comp-Trans}]
Since $\gamma_{p, \fb}\leq N_{p, \fb}$,
\[\xx\in Trans_{\fb\uparrow p} \mbox{ iff }
  \gamma_{p, \fb}(\xx)\leq N_{p, \fb} \, \wedge  \, L^{\gamma_{p,\fb}(\xx)}_{\fb}
p(\xx)<0. \]
Therefore, $Trans_{\fb\uparrow p}$ is computable as $N_{p, \fb}$ is computable according to Theorem \ref{FP}. Given $p$ and $\fb$, let
\begin{equation}
\pi^{(0)}(p,\fb,\xx)\, \define \,  ~ p(\xx)<0,
\end{equation}
for $1\leq i\in \mathbb N$,
\begin{equation}
\pi^{(i)}(p,\fb,\xx) ~ \define ~ \left(\bigwedge_{0\leq j<
i}L^{j}_{\fb} p(\xx)=0\right) \wedge L^{i}_{\fb} p(\xx)<0,
\end{equation}
and
\begin{eqnarray}
\pi(p, \fb, \xx)& \define &
                        \bigvee_{0\leq i\leq
N_{p,\fb}}\pi^{(i)}(p,\fb,\xx).
\end{eqnarray}
By Theorem \ref{Rank} and $\gamma_{p, \fb}\leq N_{p, \fb}$, we have another equivalence
\begin{equation}\label{Trans-eqv}
\xx\in Trans_{\fb\uparrow p} \mbox{ iff } \pi(p, \fb, \xx) \text{holds}.
\end{equation}

In fact, $\pi^{(i)}(p,\fb,\xx)$ here is a particular semi-algebraic
system, and so $\pi(p, \fb, \xx)$ is a union of semi-algebraic systems. Thus $Trans_{\fb\uparrow p}$ is actually a semi-algebraic set.
\end{proof}

In the SAI generation, it actually makes use of parametric polynomials $p(\uu,\xx)$ with parameter $\uu=(u_1,u_2,\dots,$ $ u_t)$.
The following theorem indicates Theorem~\ref{Rank} still holds after substituting $p(\uu,\xx)$ for $p(\xx)$.

\begin{theorem}[Parametric Rank Theorem]\label{Para-Rank}
Given \, polynomial functions $p(\uu,\xx)$ and $\fb$, there is an integer $N$ such that $\gamma_{p_{\uu_0},\fb}(\xx)<\infty$ implies $\gamma_{p_{\uu_0},\fb}(\xx)\leq N$ for all $\xx\in \mathbb R^n$ and all $\uu_0\in \mathbb{R}^t$.
\end{theorem}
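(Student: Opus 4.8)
The plan is to mimic the proof of the Rank Theorem (Theorem~\ref{Rank}), but to carry out the ascending-chain argument in the \emph{enlarged} polynomial ring $\QQ[\uu,\xx]=\QQ[u_1,\dots,u_t,x_1,\dots,x_n]$, treating the parameters $\uu$ as additional indeterminates rather than as fixed reals. The single integer $N$ extracted from this larger ring will then serve as a uniform bound valid for every instantiation $\uu_0$.

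First I would record the elementary but essential observation that the Lie derivative commutes with parameter substitution: since $\fb\in\QQ^n[\xx]$ does not depend on $\uu$, the operator $L_{\fb}$ only differentiates with respect to $\xx$ and multiplies by components of $\fb$, so substituting $\uu_0$ before or after differentiating gives the same result, i.e. $L^m_{\fb} p_{\uu_0}(\xx)=(L^m_{\fb}p)(\uu_0,\xx)$ for every $m$ and every $\uu_0$. Consequently each $L^m_{\fb}p(\uu,\xx)$ is a genuine polynomial in $\QQ[\uu,\xx]$ whose specialization at $\uu_0$ is exactly the $m$-th Lie derivative of the instance $p_{\uu_0}$.

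Next I would form the ascending chain $I_m=\<L^0_{\fb}p,L^1_{\fb}p,\dots,L^m_{\fb}p\rg$ of ideals in $\QQ[\uu,\xx]$. By the Ascending Chain Condition (Theorem~\ref{ACC}) this chain stabilizes, so there is a $k$ with $L^{k+1}_{\fb}p\in\<L^0_{\fb}p,\dots,L^k_{\fb}p\rg$ as a polynomial identity in $\uu$ and $\xx$. Fixing an arbitrary $\uu_0\in\RR^t$ and specializing this identity yields $L^{k+1}_{\fb}p_{\uu_0}\in\<L^0_{\fb}p_{\uu_0},\dots,L^k_{\fb}p_{\uu_0}\rg$ in $\RR[\xx]$. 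This is precisely the hypothesis of the Fixed Point Theorem (Theorem~\ref{FP}) applied to the ordinary polynomial $p_{\uu_0}$ with $i=k$, which gives $L^m_{\fb}p_{\uu_0}\in\<L^0_{\fb}p_{\uu_0},\dots,L^k_{\fb}p_{\uu_0}\rg$ for all $m>k$. I would then close exactly as in the proof of Theorem~\ref{Rank}: if $L^m_{\fb}p_{\uu_0}(\xx)=0$ for all $m\le k$, then every higher Lie derivative vanishes at $\xx$ as well, so $\gamma_{p_{\uu_0},\fb}(\xx)=\infty$; contrapositively, $\gamma_{p_{\uu_0},\fb}(\xx)<\infty$ forces $\gamma_{p_{\uu_0},\fb}(\xx)\le k$. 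Setting $N=k$ completes the argument.

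The main point to get right---and the only place the parametric statement genuinely differs from the non-parametric one---is that the membership $L^{k+1}_{\fb}p\in\<L^0_{\fb}p,\dots,L^k_{\fb}p\rg$ must be obtained as an identity over $\QQ[\uu,\xx]$, so that its specialization to each $\uu_0$ holds with the \emph{same} $k$; this is exactly what makes $N$ uniform in $\uu_0$. I would double-check that the cofactors witnessing this membership may be taken in $\QQ[\uu,\xx]$, hence specialize to elements of $\RR[\xx]$, and that Theorem~\ref{FP} applies verbatim to each fixed instance $p_{\uu_0}$ since $p_{\uu_0}$ is an ordinary polynomial in $\xx$.
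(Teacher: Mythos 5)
Your proposal is correct and follows essentially the same route the paper sketches: treat $\uu$ as extra indeterminates, run the ascending-chain/fixed-point argument once in $\QQ[\uu,\xx]$, and specialize the resulting ideal membership to each $\uu_0$ to get a uniform bound. The paper only gestures at this ("the difference lies in the settings of polynomials"); your version supplies the details it omits, in particular the commutation of $L_{\fb}$ with parameter substitution and the specialization of the cofactors, both of which are exactly the right points to make explicit.
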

This proof is quite close to the one of Theorem \ref{Rank}. The difference, between the proof of this theorem and the one of Theorem \ref{Rank}, lies in the settings of polynomials. Here, we consider polynomials $p$ and $\fb$ in the polynomial ring $\mathbb R[\uu, \xx]$. Similarly, we also introduce the rank function on polynomials with parameters, still denoted by $\gamma_{p, \fb}$. Accordingly, let  $N_{p,\fb}$ denote the upper bound  computed by a similarity of Theorem~\ref{FP} .

\section{Generating SAI in  Simple Case}\label{sec:idea}

Given a polynomial vector field $\dot{\xx}=\fb(\xx)$ with a semi-algebraic domain $H$ and initial condition $\Xi$, our task is to find a semi-algebraic set $P$ such that $P$ is an SAI of $(H,\fb, \Xi)$.

First of all, we illustrate our idea by showing how to compute an SAI of the simple form $P\, \define \,  p(\xx)\geq 0$ for a simple domain $H \define \, h(\xx)\geq 0$. For convenience, we will simply write the dynamical system $(h(\xx)\geq 0, \fb, \Xi)$ as $(h,\fb, \Xi)$. Notice that $P$ is an SAI of  $(h,\fb, \Xi)$ only if $\forall \xx(\Xi(\xx)\rightarrow P(\xx))$. It is evident that if $\xx(0)$ is in the interior of $\mathcal S(p(\xx)\geq 0) \cap \mathcal S(h(\xx)\geq 0)$, then the trajectory $\xx(t)$ starting at $\xx(0)$ will remain in the interior within adequately small $t>0$. Therefore, the condition of continuous invariant could be violated only at the points on the boundary $\mathcal S(p(\xx)=0) $ of $\mathcal S(p(\xx)\geq 0) $. Thus by Definition \ref{Trans-def} and Proposition \ref{Trans}, $p\geq 0$ is an invariant of $(h, \fb, \Xi)$ if and only if it meets $\forall \xx(\Xi(\xx)\rightarrow P(\xx))$ and
\begin{equation*}
    \xx\in \mathcal S(p(\xx)=0) \rightarrow \xx\notin Trans_{\fb\uparrow p}
    \setminus Trans_{\fb\uparrow h},
\end{equation*}
i.e.
\begin{equation}\label{Inv}
    \xx\in \mathcal S(p(\xx)=0) \rightarrow \xx \in (Trans_{\fb\uparrow p})^c
    \vee Trans_{\fb\uparrow h}.
\end{equation}

By equivalences (\ref{Trans-eqv}), the formula (\ref{Inv}) is equivalent to
\begin{equation*}
 p(\xx)= 0  \rightarrow (\neg\pi(p, \fb, \xx) \vee \pi(h, \fb, \xx)),
\end{equation*}
i.e.
\begin{equation}\label{equiv}
 \big(p(\xx)= 0 \wedge \pi(p, \fb, \xx)\big) \rightarrow  \pi(h, \fb, \xx).
\end{equation}

Let $\theta(h,p,\fb, \xx)$ denote the formula (\ref{equiv}). According to this equivalence, we obtain the sufficient and necessary condition for being SAI as follows.
\begin{theorem}[Criterion Theorem]\label{thm:crit}
Given a polynomial $p$, $p(\xx)\geq 0$ is an SAI of system $(h, \fb, \Xi)$ if and only if the formula $\theta(h,p,\fb, \xx) \wedge (\Xi(\xx)\rightarrow p(\xx)\geq 0)$ is true for all $\xx\in \mathbb R^n$.
\end{theorem}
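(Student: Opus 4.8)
The plan is to prove the two directions of the biconditional separately, exploiting the fact that $P\define\mathcal S(p(\xx)\geq 0)$ is automatically a semi-algebraic set, so that ``$P$ is an SAI'' is equivalent to ``$P$ is a continuous invariant'' in the sense of Definition~\ref{dfn:inv}. The second conjunct $\Xi(\xx)\rightarrow p(\xx)\geq 0$ matches clause~1 of Definition~\ref{dfn:inv} verbatim, so the whole argument reduces to showing that, given clause~1, the formula $\theta(h,p,\fb,\xx)$ holding everywhere is equivalent to clause~2. The geometric engine throughout is that a trajectory can only escape $P$ by crossing the boundary $\mathcal S(p(\xx)=0)$, and Proposition~\ref{Trans} together with the equivalence~(\ref{Trans-eqv}) translates the local escape behaviour at such a boundary point into membership in $Trans_{\fb\uparrow p}$, i.e.\ into $\pi(p,\fb,\xx)$.

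For sufficiency I would assume $\theta$ holds for all $\xx$ and verify clause~2. Fix $\xx_0$ with $p(\xx_0)\geq 0$ and $T\geq 0$ such that $\xx(t)\in H$ for all $t\in[0,T]$, and suppose toward a contradiction that $p(\xx(t^*))<0$ for some $t^*\in[0,T]$. Let $\tau=\inf\{t\in[0,T]\mid p(\xx(t))<0\}$ be the first escape time; by continuity of $p(\xx(\cdot))$ one gets $p(\xx(\tau))=0$, a sequence $s_n\downarrow 0$ with $p(\xx(\tau+s_n))<0$, and $\tau<t^*\leq T$. Writing $\yy_0=\xx(\tau)$ and using that the system is autonomous, I would apply Proposition~\ref{Trans} at $\yy_0$: the cases $\gamma_{p,\fb}(\yy_0)=\infty$ (giving $p(\xx(\tau+t))\equiv 0$ locally) and $\gamma_{p,\fb}(\yy_0)<\infty$ with positive leading Lie derivative (giving $p(\xx(\tau+t))>0$) both contradict $p(\xx(\tau+s_n))<0$, so the remaining case forces $\yy_0\in Trans_{\fb\uparrow p}$, i.e.\ $\pi(p,\fb,\yy_0)$. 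Since also $p(\yy_0)=0$, the hypothesis $\theta(h,p,\fb,\yy_0)$ yields $\pi(h,\fb,\yy_0)$, i.e.\ $\yy_0\in Trans_{\fb\uparrow h}$; by Proposition~\ref{Trans}(b) applied to $h$ this gives $h(\xx(\tau+t))<0$ for all small $t>0$. As $\tau<T$, the trajectory leaves $H$ just after $\tau$, contradicting $\xx(t)\in H$ on $[0,T]$. Hence $p(\xx(t))\geq 0$ throughout $[0,T]$ and clause~2 holds.

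For necessity I would argue contrapositively. If $\theta$ fails at some $\xx_0$, then $p(\xx_0)=0$, $\xx_0\in Trans_{\fb\uparrow p}$ and $\xx_0\notin Trans_{\fb\uparrow h}$. From $\xx_0\in Trans_{\fb\uparrow p}$ and Proposition~\ref{Trans}(b) the trajectory satisfies $p(\xx(t))<0$ for all small $t>0$, so it leaves $P$ immediately, while $p(\xx_0)=0$ gives $\xx_0\in P$. From $\xx_0\notin Trans_{\fb\uparrow h}$ a short case analysis on $\gamma_{h,\fb}(\xx_0)$ via Proposition~\ref{Trans}(a),(c) (and the case $\gamma_{h,\fb}(\xx_0)=0$, which forces $h(\xx_0)>0$) shows $h(\xx_0)\geq 0$ and $h(\xx(t))\geq 0$ for all small $t\geq 0$, so $\xx_0\in H$ and the trajectory stays in $H$ for a short time. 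Choosing $T>0$ small enough that both local descriptions hold then produces a trajectory remaining in $H$ on $[0,T]$ yet with $\xx(T)\notin P$, contradicting clause~2 of Definition~\ref{dfn:inv}; thus $P$ is not a continuous invariant. Together with the trivial matching of the $\Xi$-conjunct to clause~1, this completes the equivalence.

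I expect the main obstacle to be the first-crossing argument in the sufficiency direction: making the infimum escape time $\tau$ rigorous, proving $p(\xx(\tau))=0$ by continuity, and --- most importantly --- showing that the trajectory genuinely dips below $0$ arbitrarily close to $\tau$, so that the branches of Proposition~\ref{Trans} with $\gamma_{p,\fb}(\yy_0)=\infty$ or positive leading Lie derivative are truly excluded and $\yy_0\in Trans_{\fb\uparrow p}$ is forced. Verifying that $\tau$ lies strictly inside $[0,T]$ (so that ``just after $\tau$'' still belongs to $[0,T]$) is the small but essential bookkeeping detail on which the contradiction hinges.
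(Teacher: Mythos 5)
Your proposal is correct and follows essentially the same route as the paper: the paper justifies Theorem~\ref{thm:crit} by exactly this observation that invariance can only fail at a first boundary-crossing point of $\mathcal S(p(\xx)=0)$, where Proposition~\ref{Trans} and the equivalence~(\ref{Trans-eqv}) convert the local escape/containment behaviour into $\pi(p,\fb,\xx)$ and $\pi(h,\fb,\xx)$. Your first-escape-time argument and the contrapositive local construction are precisely the rigorous fleshing-out that the paper itself carries out later for the more general Theorem~\ref{thm:topo}.
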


Now, we are ready to present a constraint based approach to generate polynomial continuous invariants. The basic idea is as follows:
\begin{enumerate}
\item[I.] First, set a parametric polynomial $p$ as
\begin{equation}\label{Template}
p(\uu,\xx)\, \define \, \sum_{i_1+i_2+\cdots+i_n=k\leq d}u_{i_1i_2\cdots i_n}x_1^{i_1}x_2^{i_2}\cdots x_n^{i_n}.
\end{equation}
Such a parametric polynomial is called a {\it template} conventionally. There are $t= {n+d \choose d}$ many terms and accordingly $t$ many parameters $u_{i_1i_2\cdots i_n}$. For simplicity, let $\uu$ denote such a $t$-tuple $\{u_{i_1i_2\cdots i_n}\}_{i_1+i_2+\cdots+i_n=k\leq d}$.

\item[II.] Then we appy the quantifier elimination (QE\footnotemark \footnotetext{QE has been implemented in many computer algebra tools such as DISCOVERER \cite{Xia07}, QEPCAD \cite{qepcad} and Redlog \cite{Redlog}.}  for short) to the formula $\forall \xx. (\theta(h,p,\fb,\xx) \wedge (\Xi(\xx)\rightarrow p(\xx)\geq 0))$. If the output is {\it false}, then there is no polynomial continuous invariant of degree $\leq d$ for $(h, \fb,\Xi)$. Otherwise, it will give us a constraint on $\uu$, denoted by $R(\uu)$. In fact, $R(\uu)$ is a union of
semi-algebraic systems (refer to \cite{tarski51}).

\item[III.] Let $S_{\textit{Inv}}$ be the set of solutions to $R(\uu)$. Now, using a tool like DISCOVERER \cite{Xia07} to pick a $\uu_0\in S_{\textit{Inv}}$ and then $p_{\uu_0}(\xx)\geq 0$ is an invariant of $(h, \fb,\Xi)$ by Theorem \ref{thm:crit}.
\end{enumerate}
\begin{remark}
\begin{itemize}
\item[1)] Note that in real applications, one usually picks up the specific terms with nonzero coefficients. A simplified template could make the resulted polynomial satisfy special conditions and also reduce the complexity of the searching process.
\item[2)] In the above Step III, if the dimension of $S_{\textit{Inv}}$ equals $t$, then we can easily select a rational sample point $\uu_0$ from $S_{\textit{Inv}}$ and the obtained $p_{\uu_0}(\xx)\geq 0$ is an SAI in $\mathbb R^n$; otherwise when it is difficult (or impossible) to get a rational instantiation for $\uu$, we can always compute an algebraic sample point $\uu_0\in S_{\textit{Inv}}$, that is, $\uu_0$ is itself defined by polynomial equations. It is easy to show that in the latter case, $p_{\uu_0}(\xx)\geq 0$ is also an SAI in $\mathbb R^{n}$.
\end{itemize}
\end{remark}
\begin{example} Again, we make use of Example \ref{eg:rank} to demonstrate above method. That is, $\fb(x, y)\define(\dot{x}=-2y, \dot{y}=x^2)$. Here, we take $H\define\{(x,y)\in \mathbb R^2\mid h(x, y)=-x-y^2\geq 0\}$ as the domain and $\Xi\define \{(-1, 0.5), (-0.5, -0.6)\}$ as the initial states. Apply procedure (I-III), we have:
\begin{enumerate}
\item Set a template $p(\uu,\xx):=ay(x-y)\geq 0$ where $\uu\define \langle a\rangle$. Then we have $\gamma_{p,\fb}\leq N_{p,\fb}=2$.
\item Compute the corresponding formula
\begin{eqnarray*}
\theta(h,p,\fb,\xx)& \define & p=0 \wedge (\pi^{(0)}_{p,\fb,\xx}\vee \pi^{(1)}_{p,\fb,\xx}\vee \pi^{(2)}_{p,\fb,\xx}) \rightarrow \\& & (\pi^{(0)}_{h,\fb,\xx} \vee \pi^{(1)}_{h,\fb,\xx}\vee \pi^{(2)}_{h,\fb,\xx} )
\end{eqnarray*}
where
\begin{eqnarray*}
\pi^{(0)}_{h,\fb,\xx}& \define & -x-y^2<0, \\
\pi^{(1)}_{h,\fb,\xx}& \define & -x-y^2=0 \wedge 2y-2x^2y<0, \\
\pi^{(2)}_{h,\fb,\xx}& \define & -x-y^2=0 \wedge 2y-2x^2y=0 \, \wedge \\
& &  8xy^2+2x^2-2x^4<0 , \\
\pi^{(0)}_{p,\fb,\xx} & \define & ay(x-y)<0 , \\
\pi^{(1)}_{p,\fb,\xx} & \define & ay(x-y)=0\wedge -2ay^2+ax^3-2yax^2 <0 , \\
\pi^{(2)}_{p,\fb,\xx} & \define &  ay(x-y)=0\wedge -2ay^2+ax^3-2yax^2 =0\, \wedge \\ & & 40axy^2-16ay^3+32ax^3y-10ax^4<0.
\end{eqnarray*}
Then we implement quantifier elimination on formula $\forall x,y (\theta(h,p,\fb,\xx)\wedge (0.5a(-1-0.5)\geq 0 \wedge -0.6a(-0.5+0.6)\geq0)$. We get the constraint on $a$ is $a\leq 0$
\item Just pick $a=-1$, and then $-xy+y^2\geq 0$ is an invariant for $(H, \fb,\Xi)$. The grey part of the picture III is the intersection of this invariant and domain $H$.
\begin{figure}[tp]
\begin{center}
\includegraphics[width=1.5in,height=1.5in]{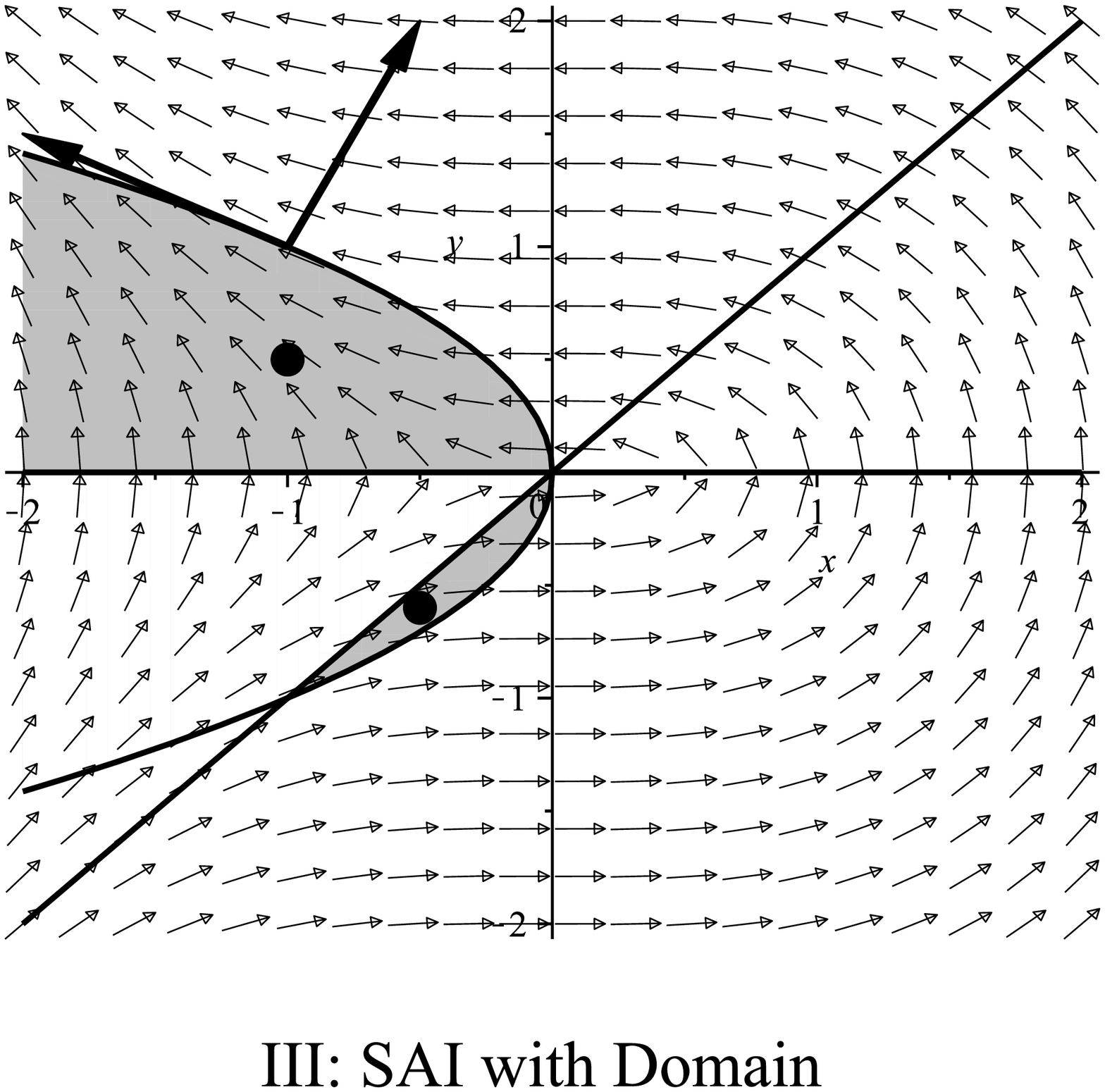}
\includegraphics[width=1.5in,height=1.5in]{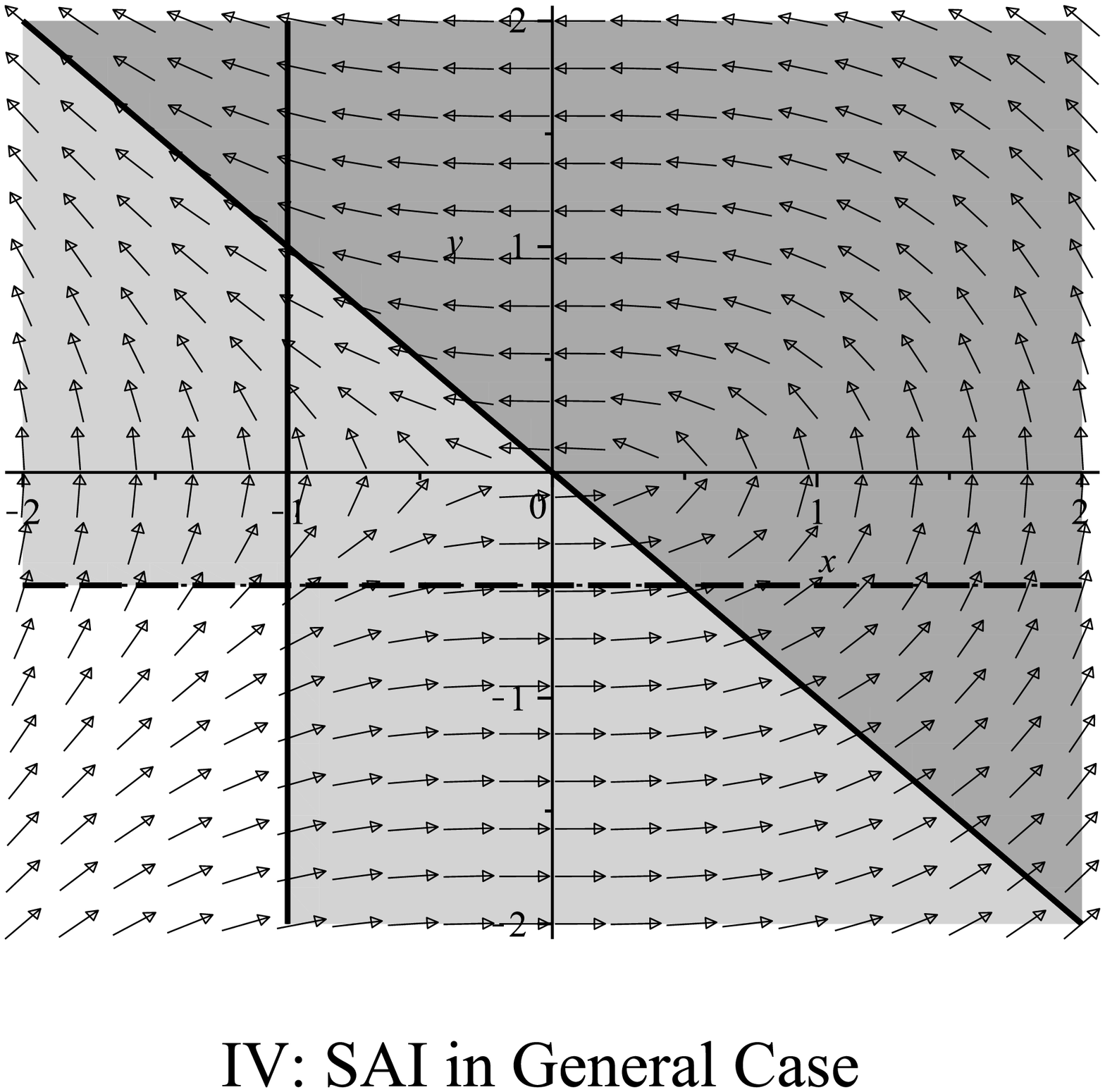}
\caption{Semi-Algebraic Invariants}
\label{fig:Inv}
\end{center}
\end{figure}

\end{enumerate}

\end{example}

\section{General Case}\label{sec:AIG}
Now, we consider how to automatically discover SAIs of a PDS in general case. Given a PDS $(H,\fb, \Xi)$ with
\begin{equation}\label{for:dom}
H=\mathcal S(\bigvee_{i=1}^{I}\bigwedge_{j=1}^{J_i}p_{ij}(\xx)\triangleright 0), \, \,
 \Xi = \mathcal S(\bigvee_{i=1}^{N}\bigwedge_{j=1}^{M_i}q_{ij}(\xx)\triangleright 0)
\end{equation}
and $\fb\in\QQ^n[\xx]$, where {$\Xi \subseteq H$} and $\triangleright\in \{\geq,>\}$. The procedure for automatically generating SAIs with a general template
 \begin{equation*}
   P=\mathcal S(\bigvee_{k=1}^{K}\bigwedge_{l=1}^{L_k}p_{kl}(\uu_{kl},\xx)\triangleright 0)\, ,\,\mbox{where} \, \triangleright\in \{\geq,>\}
  \end{equation*}
for $(H,\fb, \Xi)$, is essentially the same as the steps (I-III) depicted in the previous section.  However, we must sophisticatedly handle the complex combination due to the complicated boundaries. In what follows, we will first establish the necessary and sufficient conditions for general CIs of  a CDSwDI by some topological analysis. Then we show for SAIs of a PDS, these conditions can be encoded equivalently into first order polynomial formulas.

\subsection{Necessary and Sufficient Condition for CI}
First of all, we study a necessary and sufficient condition like formula (\ref{Inv}) for $P$ being an invariant of $(H,\fb, \Xi)$. To analyze the evolution tendency of trajectories dominated by a locally Lipschitz continuous vector field $\fb:\RR^n\mapsto\RR^n$ in terms of a subset $A$ of $\RR^n$, we need the following notions and notations.
\begin{align*}
\mbox{In}_{\fb}(A) &\,\define\,\{\xx_0\in\RR^n\mid\exists\epsilon>0 \forall t \in(0,\epsilon). \,\xx(\xx_0;t)\in A\},\\
\mbox{IvIn}_{\fb}(A) &\,\define\,\{\xx_0\in\RR^n\mid\exists\epsilon>0 \forall t \in(0,\epsilon). \,\xx(\xx_0;-t)\in A\}.
\end{align*}

Intuitively, $\xx_0\in \mbox{In}_{\fb}(A)$ means that the trajectory starting from $\xx_0$ enters $A$ immediately and keeps inside $A$ for some time; $\xx_0\in \mbox{IvIn}_{\fb}(A)$ means that the trajectory through $\xx_0$ reaches $\xx_0$ from the interior of $A$.

Analogous to $\mbox{In}_{\fb}(A)$ and $\mbox{IvIn}_{\fb}(A)$, we introduce another two notations $\mbox{Out}_{\fb}(A)$ and $\mbox{IvOut}_{\fb}(A)$.
\begin{align*}
\mbox{Out}_{\fb}(A) &\,\define\, \{\xx_0\in\RR^n\mid\exists\epsilon>0 \forall t \in(0,\epsilon). \,\xx(\xx_0;t)\in A^c\};\\
\mbox{IvOut}_{\fb}(A) &\,\define\, \{\xx_0\in\RR^n\mid\exists\epsilon>0 \forall t \in(0,\epsilon). \,\xx(\xx_0;-t)\in A^c\},
\end{align*}
where $A^c$ stands for the complement of $A$ in $\RR^n$. Intuitively, $\xx_0\in \mbox{Out}_{\fb}(A)$ means that
the trajectory starting at $\xx_0$ leaves $A$ immediately and keep outside $A$ for some time in future;  $\xx_0\in \mbox{IvOut}_{\fb}(A)$ means that
the trajectory passing through $\xx_0$ reaches $\xx_0$ from the outside of $A$.

Based on the above notations, we have

\begin{theorem}\label{thm:topo}
Given a CDSwDI $(H,\fb, \Xi)$ with $H\subseteq \RR^n$, $\Xi\subseteq \RR^n$ and locally Lipschitz continuous $\fb:\RR^n\mapsto\RR^n$, a subset $P$ of $\RR^n$ is a CI of $(H,\fb, \Xi)$ if and only if
\begin{enumerate}
  \item $\Xi \subseteq P$;
  \item $\forall\xx\in P\cap H\cap \,\emph{In}_{\fb}(H).\,\xx\in \emph{In}_{\fb}(P)$;
  \item $\forall\xx\in P^c\cap H\cap\, \emph{IvIn}_{\fb}(H).\,\xx\in \big(\emph{IvIn}_{\fb}(P)\big)^c$.
\end{enumerate}
\end{theorem}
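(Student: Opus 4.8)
The plan is to prove both directions of the equivalence, treating $P$ being a CI as the conjunction of the inclusion $\Xi\subseteq P$ (condition 1, which matches Definition~\ref{dfn:inv} verbatim) and a local trajectory-tendency condition that I will show is captured exactly by conditions 2 and 3. Since condition 1 is shared, the real work is to establish that, assuming $\Xi\subseteq P$, the second clause of Definition~\ref{dfn:inv} is equivalent to conditions 2 and 3 together. The intuition behind the decomposition is that an invariant can only fail when a trajectory, while still inside $H$, crosses out of $P$; conditions 2 and 3 forbid the two local ways this crossing can happen---either the trajectory is poised to enter $P$ from a point of $P\cap H$ while staying in $H$ (condition 2 forces it to actually stay in $P$), or a point outside $P$ is reached from inside $P$ along an $H$-admissible inverse trajectory (condition 3 rules this out by contraposition).

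First I would prove soundness (CI $\Rightarrow$ conditions 1--3). Condition 1 is immediate. For condition 2, take $\xx\in P\cap H\cap\mbox{In}_{\fb}(H)$; since $\xx\in\mbox{In}_{\fb}(H)$ there is $\epsilon>0$ with $\xx(\xx;t)\in H$ for $t\in(0,\epsilon)$, and together with $\xx\in H$ this gives a time interval $[0,T]$ on which the forward trajectory stays in $H$; applying clause 2 of Definition~\ref{dfn:inv} with $\xx_0=\xx$ yields $\xx(\xx;t)\in P$ on that interval, hence $\xx\in\mbox{In}_{\fb}(P)$. For condition 3 I would argue by contraposition: suppose some $\xx\in P^c\cap H\cap\mbox{IvIn}_{\fb}(H)$ had $\xx\in\mbox{IvIn}_{\fb}(P)$; then the inverse trajectory stays in both $H$ and $P$ for small negative times, so there is a point $\yy=\xx(\xx;-s)$ in $P$ whose forward trajectory reaches $\xx\in P^c$ while remaining in $H$, contradicting invariance. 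This reuses Definition~\ref{dfn:inv} with $\xx_0=\yy$.

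The harder direction is completeness (conditions 1--3 $\Rightarrow$ CI). Assume the three conditions and suppose, for contradiction, that invariance fails: there exist $\xx_0\in P$ and $T\geq 0$ with $\xx(\xx_0;t)\in H$ for all $t\in[0,T]$ but $\xx(\xx_0;t^\ast)\notin P$ for some $t^\ast\in(0,T]$. The natural device is to consider the exit time $\tau=\inf\{t\in[0,T]\mid \xx(\xx_0;t)\in P^c\}$ and analyze the trajectory at $t=\tau$. I would split into the cases $\xx(\xx_0;\tau)\in P$ and $\xx(\xx_0;\tau)\in P^c$. In the first case the point $\xx(\xx_0;\tau)$ lies in $P\cap H$ and, since the trajectory stays in $H$ just beyond $\tau$, it lies in $\mbox{In}_{\fb}(H)$; condition 2 then forces the trajectory to remain in $P$ for a little while past $\tau$, contradicting the definition of $\tau$ as the infimum of exit times. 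In the second case $\xx(\xx_0;\tau)\in P^c\cap H$, and because all earlier times are in $P$ the inverse trajectory from $\xx(\xx_0;\tau)$ stays in $P$ and in $H$ for small positive shifts, placing $\xx(\xx_0;\tau)$ in $\mbox{IvIn}_{\fb}(P)\cap\mbox{IvIn}_{\fb}(H)$ and violating condition 3.

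I expect the main obstacle to be the careful handling of the exit time $\tau$ and the boundary behaviour there, since $P$ and $H$ are arbitrary subsets with no topological assumptions stated beyond what $\mbox{In}_{\fb}$ and $\mbox{IvIn}_{\fb}$ encode. In particular one must ensure $\tau$ is attained and correctly distinguish whether $\xx(\xx_0;\tau)$ belongs to $P$ or $P^c$, and verify that the definitions of $\mbox{In}_{\fb}$ and $\mbox{IvIn}_{\fb}$ (which only guarantee membership on a punctured one-sided neighbourhood $(0,\epsilon)$) mesh correctly with the closed-interval formulation of Definition~\ref{dfn:inv}; reconciling the open/closed and forward/inverse neighbourhoods at the single instant $\tau$ is the delicate bookkeeping. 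The remaining steps are direct unfoldings of the definitions of the four $\mbox{In}/\mbox{Out}$ operators and of Definition~\ref{dfn:inv}, and should be routine once the exit-time argument is set up correctly.
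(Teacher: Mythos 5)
Your proposal is correct and follows essentially the same route as the paper: the paper's proof also hinges on the supremum of the interval $\mathcal T_P=\{T\geq 0\mid\forall t\in[0,T].\,\xx(t)\in P\}$ (your exit time $\tau$), splits on whether that interval is right-closed or right-open (your cases $\xx(\xx_0;\tau)\in P$ versus $\xx(\xx_0;\tau)\in P^c$), and uses the same backward time-shift to turn an $\mbox{IvIn}_{\fb}(P)\cap\mbox{IvIn}_{\fb}(H)$ witness into a forward violation of Definition~\ref{dfn:inv}. The only difference is bookkeeping: the paper phrases both directions as contrapositives, while you prove soundness directly, but the constructions are identical.
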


%

\begin{proof}

First of all, the proof about condition 1 is trivial. In what follows, we focus on the proofs about conditions 2 and 3.

``$\Leftarrow$" Suppose $P$ is not a CI of $(H,\fb, \Xi)$. According to Definition \ref{dfn:inv}, there exists $\xx_0\in P\cap H$, $T_0> 0$ and $T_1\in(0,T_0]$ s.t.
$$\forall t\in[0,T_0].\,\xx(t)\in H \quad\mbox{and}\quad \xx(T_1)\notin P.$$
It is not difficult to check that the set
$$\mathcal T_P\,\define\,\{T\in\RR,T\geq 0\mid\forall t \in [0,T].\,\xx(t)\in P\}$$
is not empty, and is a right-open or right-closed interval $[0,T_P\rg$ with
$0\leq T_P\leq T_1$. If $[0,T_P\rg=[0,T_P]$, then $T_P<T_1$. Thus $\xx(T_P)\in P\cap H\cap\, \mbox{In}_{\fb}(H)$, but $\xx(T_P)\notin \mbox{In}_{\fb}(P)$, otherwise $T_P$ could not be the right end point of $\mathcal T_P$. So 2 is violated.

If $[0,T_P\rg=[0,T_P)$, then $T_P>0$ and $\xx (T_P)\in P^c\cap H$. Furthermore, $\forall t\in[0,T_P).\, \xx(t)\in P\cap H$, i.e.
$$\forall t\in[0,T_P).\, \xx(\xx_0;t)\in P\cap H,$$
which is equivalent to
$$\forall t\in[-T_P,0).\, \xx(\xx_0;t+T_P)\in P\cap H.$$
Let $\xx_0'=\xx(\xx_0;T_P)$. Then $\xx(\xx_0';t)=\xx(\xx_0;t+T_p)$. Thus we get
$$\forall t\in[-T_P,0).\, \xx(\xx_0';t)\in P\cap H,$$
i.e.
$$\forall t\in(0,T_P].\, \xx(\xx_0';-t)\in P\cap H.$$
This means $\xx_0'\in \mbox{IvIn}_{\fb}(H)\cap \mbox{IvIn}_{\fb}(P).$
Besides,
$$\xx_0'=\xx(\xx_0';0)=\xx(\xx_0;T_P)=\xx(T_P)\in P^c\cap H.$$
So 3 is violated by $\xx_0'$.

``$\Rightarrow$" If 2 does not hold,
then there exists $\xx_1\in P\cap H$, $\epsilon_1>0$  and $0<t_1<\epsilon_1$ such that $\forall t\in [0,\epsilon_1).\,\xx(\xx_1;t)\in H$ and $\xx(\xx_1;t_1)\notin P$. By Definition \ref{dfn:inv}, $P$ could not be a CI.

If 3 does not hold, then there exists
$$\xx_2\in P^c\cap H\cap\mbox{IvIn}_{\fb}(H)\cap \mbox{IvIn}_{\fb}(P).$$
This means there exists $\epsilon_2>0$ such that
$$\forall t\in (0,\epsilon_2).\,\xx(\xx_2;-t)\in P\cap H,$$
i.e.
$$\forall t\in (-\epsilon_2,0).\,\xx(\xx_2;t)\in P\cap H.$$
Thus
$$\forall t\in [-\epsilon_2/2,0).\,\xx(\xx_2;t)\in P\cap H.$$
i.e.
$$\forall t\in [0,\epsilon_2/2).\,\xx(\xx_2;t-\epsilon_2/2)\in P\cap H.$$
Let $\xx_2'=\xx(\xx_2;-\epsilon_2/2)$. Then $\xx(\xx_2';t)=\xx(\xx_2;t-\epsilon_2/2)$. Thus we get
$$\forall t\in [0,\epsilon_2/2).\,\xx(\xx_2';t)\in P\cap H.$$
Furthermore, $$\xx(\xx_2';\epsilon_2/2)=\xx(\xx_2;0)=\xx_2\in P^c\cap H.$$
Thus the trajectory starting from $\xx_2'$ violates the condition of Definition \ref{dfn:inv}, so $P$ could not be a CI either.
\end{proof}

\subsection{Necessary and Sufficient Condition for SAI}

Given a PDS $(H,\fb, \Xi)$ and an SAI $P$, to encode the conditions in Theorem \ref{thm:topo} as polynomial formulas, it is sufficient to show that $\mbox{In}_{\fb}(H)$, $\mbox{In}_{\fb}(P)$, $\mbox{IvIn}_{\fb}(H)$ and $\mbox{IvIn}_{\fb}(P)$ are all semi-algebraic sets. By the structure of $H$, it is natural to consider the relation between $\mbox{In}_{\fb}(H)$ and $\mbox{In}_{\fb}\big(\mathcal S(p_{ij}\triangleright 0)\big)$. Through a careful analysis, we establish the following crucial equality:

\begin{theorem} \label{thm:dcp-inward} For a semi-algebraic set $H$ defined by formula (\ref{for:dom}) and a polynomial vector field $\fb$, we have
$$\emph{In}_{\fb}(H)=\bigcup_{i=1}^I\bigcap_{j=1}^{J_i} \emph{In}_{\fb}\big(\mathcal S(p_{ij}\triangleright 0)\big).$$
\end{theorem}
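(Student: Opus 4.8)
The plan is to reduce the statement to two facts about the operator $\text{In}_{\fb}$: that it commutes with finite intersections, and that it commutes with the particular finite union appearing in~(\ref{for:dom}). Writing $H_i \define \bigcap_{j=1}^{J_i}\mathcal S(p_{ij}\triangleright 0)$, so that $H=\bigcup_{i=1}^I H_i$, I would first establish the set-theoretic identity $\text{In}_{\fb}(A\cap B)=\text{In}_{\fb}(A)\cap\text{In}_{\fb}(B)$. This is immediate from the definition: a single $\epsilon$ witnessing membership in $\text{In}_{\fb}(A\cap B)$ witnesses membership in each factor, and conversely the minimum of two witnessing $\epsilon$'s works for the intersection. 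Iterating over $j$ gives $\bigcap_{j}\text{In}_{\fb}\big(\mathcal S(p_{ij}\triangleright 0)\big)=\text{In}_{\fb}(H_i)$, so the right-hand side of the theorem equals $\bigcup_{i=1}^I\text{In}_{\fb}(H_i)$. It therefore remains to prove $\text{In}_{\fb}\big(\bigcup_i H_i\big)=\bigcup_i\text{In}_{\fb}(H_i)$.

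The inclusion $\supseteq$ is the easy half: $\text{In}_{\fb}$ is monotone, since $A\subseteq B$ and $\xx_0\in\text{In}_{\fb}(A)$ force $\xx(\xx_0;t)\in A\subseteq B$ for small $t>0$; as each $H_i\subseteq H$, we get $\text{In}_{\fb}(H_i)\subseteq\text{In}_{\fb}(H)$ for every $i$, whence the union is contained in $\text{In}_{\fb}(H)$.

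The inclusion $\subseteq$ is where the real work lies, and it is the main obstacle, because for a general (non-semi-algebraic) family this fails — a trajectory could in principle oscillate between distinct $H_i$ on every right-neighbourhood of $0$ without eventually staying in any single one. Analyticity is what rules this out. Fix $\xx_0\in\text{In}_{\fb}(H)$ with witnessing $\epsilon$, so $\xx(t)\in\bigcup_i H_i$ for all $t\in(0,\epsilon)$. As in the proof of Proposition~\ref{Trans}, each composition $p_{ij}(\xx(t))$ is analytic in $t$ near $0$; hence it is either identically zero on a neighbourhood of $0$ or has only isolated zeros, and in either case there is $\delta_{ij}>0$ on which $p_{ij}(\xx(t))$ is of constant sign for $t\in(0,\delta_{ij})$. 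Taking $\delta$ to be the minimum of $\epsilon$ and these finitely many $\delta_{ij}$, I conclude that on $(0,\delta)$ the truth value of every atom $p_{ij}(\xx(t))\triangleright 0$ — for both $\triangleright\in\{\geq,>\}$ — is constant in $t$, and therefore so is the truth value of each conjunction $\xx(t)\in H_i$.

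Finally I would invoke finiteness of the index set $\{1,\dots,I\}$. On $(0,\delta)$ membership of $\xx(t)$ in $H_i$ is, for each fixed $i$, either always true or always false; let $G$ be the set of indices for which it is always true. Any $t\in(0,\delta)$ satisfies $\xx(t)\in H_i$ for some $i$, and that $i$ necessarily lies in $G$, so $G\neq\emptyset$. Choosing $i_0\in G$ yields $\xx(t)\in H_{i_0}$ for all $t\in(0,\delta)$, that is $\xx_0\in\text{In}_{\fb}(H_{i_0})\subseteq\bigcup_i\text{In}_{\fb}(H_i)$, which completes the inclusion and hence the proof.
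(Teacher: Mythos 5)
Your proof is correct and rests on the same key fact as the paper's: along an analytic trajectory each atom $p_{ij}(\xx(t))\triangleright 0$ has constant truth value on some right-neighbourhood $(0,\delta_{ij})$ of $0$, which the paper packages as the dichotomy $\xx_0\in \mbox{In}_{\fb}$ or $\xx_0\in \mbox{Out}_{\fb}$ for atoms (Lemma \ref{lm:atm}) and for conjunctions (Lemma \ref{lm:conjunct}). The only organizational difference is that you prove the hard inclusion directly, taking a common $\delta$ and pigeonholing over the finitely many disjuncts, whereas the paper argues by contraposition: if $\xx_0$ lies in no $\bigcap_j\mbox{In}_{\fb}\big(\mathcal S(p_{ij}\triangleright 0)\big)$ then it lies in $\mbox{Out}_{\fb}$ of every disjunct and hence in $\mbox{Out}_{\fb}(H)$.
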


To prove Theorem \ref{thm:dcp-inward}, we need the following two Lemmas, wherein $\triangleright\in \{\geq,>\}$.

\begin{lemma}\label{lm:atm}
For any atomic polynomial formula $p\triangleright 0$ and polynomial vector field $\fb$, and for any $\xx_0\in\RR^n$, we have either $\xx_0\in \emph{In}_{\fb}\big(\mathcal S(p\triangleright 0)\big)$ or $\xx_0\in \emph{Out}_{\fb}\big(\mathcal S(p\triangleright 0)\big)$.
\end{lemma}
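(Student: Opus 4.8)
\textbf{Proof proposal for Lemma~\ref{lm:atm}.}
The plan is to fix an atomic formula $p\triangleright 0$ (with $\triangleright\in\{\geq,>\}$) and an arbitrary point $\xx_0\in\RR^n$, and to perform a case analysis on the pointwise rank $\gamma_{p,\fb}(\xx_0)$ together with the sign of the first nonvanishing Lie derivative, exactly as in Proposition~\ref{Trans}. The key observation is that for a \emph{single} polynomial $p$, the behaviour of $p(\xx(t))$ for small $t>0$ is completely dictated by the Taylor expansion~(\ref{eqn:taylor}); there is no ``intermediate'' option, so the trajectory must immediately stay in $\mathcal S(p\triangleright 0)$ or immediately leave it. This dichotomy is precisely the content of the lemma.

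First I would treat the case $\gamma_{p,\fb}(\xx_0)=\infty$. By part~(c) of Proposition~\ref{Trans}, there is $\epsilon>0$ with $p(\xx(t))=0$ for all $t\in(0,\epsilon)$. Since $\triangleright\in\{\geq,>\}$, the conclusion $p(\xx(t))=0$ forces membership in $\mathcal S(p\geq 0)$ but \emph{not} in $\mathcal S(p>0)$. Hence for $\triangleright\,=\,\geq$ we land in $\mbox{In}_{\fb}$, while for $\triangleright\,=\,>$ we land in $\mbox{Out}_{\fb}$. Next, when $\gamma_{p,\fb}(\xx_0)<\infty$, I would split on the sign of $L^{\gamma_{p,\fb}(\xx_0)}_{\fb}p(\xx_0)$. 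If it is positive, part~(a) of Proposition~\ref{Trans} gives $p(\xx(t))>0$ on some $(0,\epsilon)$, which puts $\xx_0$ in $\mbox{In}_{\fb}\big(\mathcal S(p\triangleright 0)\big)$ for \emph{both} choices of $\triangleright$; if it is negative, part~(b) gives $p(\xx(t))<0$ on $(0,\epsilon)$, putting $\xx_0$ in $\mbox{Out}_{\fb}\big(\mathcal S(p\triangleright 0)\big)$ for both choices. The only subtlety is the boundary-with-equality case handled above, and it is resolved cleanly once the strict/non-strict distinction is tracked.

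The main point to get right is bookkeeping rather than any deep obstacle: one must verify that the three alternatives of Proposition~\ref{Trans} are genuinely exhaustive for a single polynomial, so that every $\xx_0$ is classified, and that the definitions of $\mbox{In}_{\fb}$ and $\mbox{Out}_{\fb}$ (which quantify over a single trajectory on a right-neighbourhood of $0$) match the ``for all small $t>0$'' statements produced by the Taylor expansion. Because these sets are defined via the \emph{same} forward trajectory $\xx(\xx_0;t)$ and $A^c$ is literally the complement of $A=\mathcal S(p\triangleright 0)$, the statements $p(\xx(t))\triangleright 0$ and $\neg\big(p(\xx(t))\triangleright 0\big)$ are exact negations, so no gap between $\mbox{In}_{\fb}$ and $\mbox{Out}_{\fb}$ can arise. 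I expect the whole argument to be a short, direct corollary of Proposition~\ref{Trans}, the only care needed being the equality case where $\geq$ and $>$ behave differently.
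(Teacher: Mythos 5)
Your proposal is correct and follows essentially the same route as the paper: a case analysis on $\gamma_{p,\fb}(\xx_0)$ and the sign of the first nonvanishing Lie derivative via the Taylor expansion of $p(\xx(t))$, with the infinite-rank case splitting according to whether $\triangleright$ is $\geq$ (giving $\mbox{In}_{\fb}$) or $>$ (giving $\mbox{Out}_{\fb}$). The paper simply inlines the analyticity/Taylor argument rather than citing Proposition~\ref{Trans}, but the substance is identical.
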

\begin{proof}
Polynomial functions are analytic, so $\fb$ is analytic and thus
$\xx(\xx_0;t)$ ($\xx(t)$ for short) is analytic in a small interval $(a,b)$ containing $0$. Besides, $p$ is analytic, so the Taylor expansion of $p(\xx(t))$ at $t=0$
\begin{eqnarray}
p(\xx(t))& = &p(\xx_0)+\frac{\ud p}{\ud t}\cdot t + \frac{\ud^{2} p}{\ud t^{2}}\cdot\frac{t^{2}}{2!}+\cdots \nonumber\\
& = & L_{\fb}^0 p(\xx_0)+L_{\fb}^1 p(\xx_0)\cdot t + L_{\fb}^2
p(\xx_0)\cdot \frac{t^{2}}{2!}+\cdots\quad \nonumber
\end{eqnarray}
converges in $(a,b)$. Then the proof proceeds by case analysis on the sign of
$L_{\fb}^{\gamma_{p,\fb}(\xx_0)}p(\xx_0)$:
\begin{itemize}
  \item if $\gamma_{p,\fb}(\xx_0)=\infty$, then $\exists \epsilon>0\forall t\in (0,\epsilon). \,p(\xx(t))=0$, so $\xx_0\in \mbox{In}_{\fb}\big(\mathcal S(p\geq 0)\big)$ and $\xx_0\in \mbox{Out}_{\fb}\big(\mathcal S(p> 0)\big)$;
  \item if $L_{\fb}^{\gamma_{p,\fb}(\xx_0)}p(\xx_0)>0$, then $\exists \epsilon>0\forall t\in (0,\epsilon). \,p(\xx(t))>0$, so $\xx_0\in \mbox{In}_{\fb}\big(\mathcal S(p\geq 0)\big)$ and $\xx_0\in \mbox{In}_{\fb}\big(\mathcal S(p> 0)\big)$;
  \item if $L_{\fb}^{\gamma_{p,\fb}(\xx_0)}p(\xx_0)<0$, then $\exists \epsilon>0\forall t\in (0,\epsilon). \,p(\xx(t))<0$, so $\xx_0\in \mbox{Out}_{\fb}\big(\mathcal S(p\geq 0)\big)$ and $\xx_0\in \mbox{Out}_{\fb}\big(\mathcal S(p> 0)\big)$.
\end{itemize}
Then we can see that for all $\xx_0\in\RR^n$, either $\xx_0\in \mbox{In}_{\fb}\big(\mathcal S(p\,\triangleright 0)\big)$ or $\xx_0\in \mbox{Out}_{\fb}\big(\mathcal S(p\triangleright 0)\big)$.
\end{proof}

\begin{lemma}\label{lm:conjunct}
For any semi-algebraic set $B=\mathcal S\big(\bigwedge_{j=1}^J p_j\triangleright 0\big)$, and polynomial vector field, we have
\begin{enumerate}
  \item $\emph{In}_{\fb}(B)=\bigcap_{j=1}^J\emph{In}_{\fb}\big(\mathcal S(p_j\triangleright 0)\big)$;
  \item for any $\xx_0\in\RR^n$, either $\xx_0\in\emph{In}_{\fb}(B)$ or $\xx_0\in\emph{Out}_{\fb}(B)$.
\end{enumerate}
\end{lemma}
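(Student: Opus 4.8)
The plan is to prove the two claims in order, deriving the second from the first together with the atomic dichotomy of Lemma \ref{lm:atm}. For claim~1, I would establish the two set inclusions separately. The inclusion $\mbox{In}_{\fb}(B)\subseteq\bigcap_{j=1}^J\mbox{In}_{\fb}\big(\mathcal S(p_j\triangleright 0)\big)$ is immediate from the definition: if $\xx_0\in\mbox{In}_{\fb}(B)$ then the trajectory lies in $B=\bigcap_j\mathcal S(p_j\triangleright 0)$ on some punctured interval $(0,\epsilon)$, hence lies in each $\mathcal S(p_j\triangleright 0)$ on that same interval, witnessing $\xx_0\in\mbox{In}_{\fb}\big(\mathcal S(p_j\triangleright 0)\big)$ for every $j$. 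For the reverse inclusion, I would invoke finiteness of the index set: given $\xx_0$ in every $\mbox{In}_{\fb}\big(\mathcal S(p_j\triangleright 0)\big)$, pick a witnessing $\epsilon_j>0$ for each $j$ and set $\epsilon=\min_{1\le j\le J}\epsilon_j$. Since $J$ is finite, $\epsilon>0$, and on $(0,\epsilon)$ the trajectory simultaneously satisfies every $p_j\triangleright 0$, so $\xx(\xx_0;t)\in B$ there, giving $\xx_0\in\mbox{In}_{\fb}(B)$.

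For claim~2, I would fix $\xx_0\in\RR^n$ and apply Lemma \ref{lm:atm} to each atomic formula $p_j\triangleright 0$, so that for every $j$ either $\xx_0\in\mbox{In}_{\fb}\big(\mathcal S(p_j\triangleright 0)\big)$ or $\xx_0\in\mbox{Out}_{\fb}\big(\mathcal S(p_j\triangleright 0)\big)$. This yields an exhaustive case split. If $\xx_0$ is an ``In'' point for all $j$, then by claim~1 we get $\xx_0\in\mbox{In}_{\fb}(B)$. Otherwise there is some index $j_0$ with $\xx_0\in\mbox{Out}_{\fb}\big(\mathcal S(p_{j_0}\triangleright 0)\big)$, i.e. the trajectory lies in $\mathcal S(p_{j_0}\triangleright 0)^c$ on some $(0,\epsilon)$; since $B\subseteq\mathcal S(p_{j_0}\triangleright 0)$ we have $\mathcal S(p_{j_0}\triangleright 0)^c\subseteq B^c$, so the trajectory lies in $B^c$ on $(0,\epsilon)$, whence $\xx_0\in\mbox{Out}_{\fb}(B)$. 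These two cases cover all possibilities, establishing the dichotomy.

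The argument is largely bookkeeping, so I do not expect a genuine obstacle; the two points that must not be glossed over are, first, the use of \emph{finiteness} of $J$ in taking $\epsilon=\min_j\epsilon_j>0$ (an infinite conjunction could drive this infimum to zero and break the reverse inclusion in claim~1), and second, the containment $B\subseteq\mathcal S(p_{j_0}\triangleright 0)$, which is exactly what lets a single ``Out'' component force the whole intersection out. Everything else follows directly from the definitions of $\mbox{In}_{\fb}$ and $\mbox{Out}_{\fb}$ and from Lemma \ref{lm:atm}.
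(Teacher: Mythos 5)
Your proposal is correct and follows essentially the same route as the paper: the reverse inclusion in claim~1 via the minimum of finitely many witnessing $\epsilon_j$, and claim~2 by combining claim~1 with the atomic dichotomy of Lemma~\ref{lm:atm} applied to a single failing conjunct (the paper phrases this contrapositively, starting from $\xx_0\notin \mbox{In}_{\fb}(B)$, but the argument is the same). The two caveats you flag --- finiteness of $J$ and the containment $B\subseteq\mathcal S(p_{j_0}\triangleright 0)$ --- are exactly the points the paper's proof relies on.
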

\begin{proof}
\begin{enumerate}
  \item ``$\subseteq$" Trivial.

  ``$\supseteq$" For any $\xx_0\in \bigcap_{j=1}^J\mbox{In}_{\fb}\big(\mathcal S(p_j\triangleright 0)\big)$, there exist positive $\epsilon_1,\epsilon_2,\ldots,\epsilon_J$ such that for all $1\leq j\leq J$ and any $t\in (0,\epsilon_j)$, $p_j(\xx(\xx_0;t))\triangleright 0$. Let $\epsilon=\min\{\epsilon_1,\epsilon_2,\ldots,\epsilon_J\}$. Then for any
  $t\in (0,\epsilon)$, $\bigwedge_{j=1}^J p_j(\xx(\xx_0;t))\triangleright 0$. Thus $\xx_0\in\mbox{In}_{\fb}(B)$.
  \item By 1 if $\xx_0\notin \mbox{In}_{\fb}(B)$, then there exists $j_0\in[1,J]$ such that $\xx_0\notin \mbox{In}_{\fb}\big(\mathcal S(p_{j_0}\triangleright 0)\big)$. By Lemma \ref{lm:atm}, $\xx_0\in \mbox{Out}_{\fb}\big(\mathcal S(p_{j_0}\triangleright 0)\big)$. Thus there exists $\epsilon>0$ s.t. for all $t\in(0,\epsilon)$, $\neg\big(p_{j_0}(\xx(\xx_0;t))\triangleright 0\big)$. Then for all $t\in(0,\epsilon)$, $\bigvee_{j=1}^J \neg\big(p_{j}(\xx(\xx_0;t))\triangleright 0\big)$, i.e. $\neg\big(\bigwedge_{j=1}^J p_j(\xx(\xx_0;t))\triangleright 0\big)$.
      This means $\xx_0\in\mbox{Out}_{\fb}(B)$.
\end{enumerate}
\end{proof}
Now we are ready to prove Theorem \ref{thm:dcp-inward} as follows.
\begin{proof}[of  Theorem \ref{thm:dcp-inward}]
``$\supseteq$" Trivial.

``$\subseteq$"\, If $\xx_0\notin \bigcup_{i=1}^I\bigcap_{j=1}^J \mbox{In}_{\fb}\big(\mathcal S(p_{ij}\triangleright 0)\big)$, then for all $i\in [1,I]$, $\xx_0 \notin \bigcap_{j=1}^J \mbox{In}_{\fb}\big(\mathcal S(p_{ij}\triangleright 0)\big)$. By Lemma \ref{lm:conjunct}, for all $i\in [1, I]$, $\xx_0\in \mbox{Out}_{\fb}(B)$, where $B=\bigwedge_{j=1}^J p_{ij}\triangleright 0$. Thus there exist positive $\epsilon_1,\epsilon_2,\ldots,\epsilon_I$ s.t.
for all $i\in [1,I]$ and any $t\in (0,\epsilon_i)$, $\neg\big(\bigwedge_{j=1}^J p_{ij}(\xx(\xx_0;t))\triangleright 0\big)$. Let $\epsilon=\min\{\epsilon_1,\epsilon_2,\ldots,\epsilon_I\}$. Then for all $t\in (0,\epsilon)$, $\bigwedge_{i=1}^I \neg\big(\bigwedge_{j=1}^J p_{ij}(\xx(\xx_0;t))\triangleright 0\big)$, or equivalently, $\neg\big(\bigvee_{i=1}^I\bigwedge_{j=1}^J p_{ij}(\xx(\xx_0;t))\triangleright 0\big)$. This means $\xx_0\in\mbox{Out}_{\fb}(H)$ and $\xx_0\notin \mbox{In}_{\fb}(H)$.
\end{proof}

Based on Theorem \ref{thm:dcp-inward}, in order to show $\mbox{In}_{\fb}(H)$ is a semi-algebraic set  for any semi-algebraic set $H$, it is sufficient to show that $\mbox{In}_{\fb}\big(\mathcal S(p\triangleright 0)\big)$ is a semi-algebraic set  for any polynomial $p$, where $\triangleright \in\{\geq,>\}$.

In fact, we have proved in Lemma \ref{lm:atm} the following result.

\begin{lemma}\label{lm:Gamma+0}
  For any polynomial $p$ and polynomial vector field $\fb$,
  \begin{align*}
  \emph{In}_{\fb}(\mathcal S(p> 0))&=\Gamma_{+}(p,\fb)\quad\quad\,\, \mbox{and}\\
  \emph{In}_{\fb}(\mathcal S(p\geq 0))&=\Gamma_{0}(p,\fb)\cup\Gamma_{+}(p,\fb)\enspace ,
  \end{align*}
  where
  \begin{align}
  \Gamma_{0}(p,\fb)&\,\define\, \{\xx_0\in\RR^n\mid \gamma_{p,\fb}(\xx_0)=\infty\}\,\,\mbox{and}\\
  \Gamma_+(p,\fb) &\,\define\, \{\xx_0\in\RR^n\mid \gamma_{p,\fb}(\xx_0)<0 \,\wedge\, L_{\fb}^{\gamma_{p,\fb}(\xx_0)}p(\xx_0)>0\}.
  \end{align}
\end{lemma}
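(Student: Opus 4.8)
The plan is to read both set equalities directly off Proposition~\ref{Trans}, since the analytic work is already done there. First I would note that for every $\xx_0\in\RR^n$ exactly one of the three situations in Proposition~\ref{Trans} occurs: either the rank is infinite, or it is finite and the first nonvanishing Lie derivative $L^{\gamma_{p,\fb}(\xx_0)}_{\fb}p(\xx_0)$ is strictly positive, or it is finite and that derivative is strictly negative. These cases are exhaustive and mutually exclusive, because $\gamma_{p,\fb}(\xx_0)$ is by definition the least index at which the Lie derivative is nonzero, so in the finite case the leading derivative is automatically $\neq 0$. (Here I read the defining constraint of $\Gamma_+(p,\fb)$ as $\gamma_{p,\fb}(\xx_0)<\infty$, so that $\Gamma_+$ is the second class and $\Gamma_0$ the first.)

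Next I would apply Proposition~\ref{Trans} to turn each case into the sign of $p(\xx(\xx_0;t))$ for all sufficiently small $t>0$: part (c) gives $p(\xx(t))=0$, part (a) gives $p(\xx(t))>0$, and part (b) gives $p(\xx(t))<0$. Translating signs into the membership predicate of $\mbox{In}_{\fb}$ then settles everything. If $p(\xx(t))>0$ on some $(0,\epsilon)$ then $\xx_0$ lies in $\mbox{In}_{\fb}\big(\mathcal S(p>0)\big)$ and in $\mbox{In}_{\fb}\big(\mathcal S(p\geq 0)\big)$; if $p(\xx(t))=0$ on some $(0,\epsilon)$ then $\xx_0\in\mbox{In}_{\fb}\big(\mathcal S(p\geq 0)\big)$ but $\xx_0\notin\mbox{In}_{\fb}\big(\mathcal S(p>0)\big)$; and if $p(\xx(t))<0$ then $\xx_0$ is outside both sets. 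Collecting the contributing cases, $\mbox{In}_{\fb}\big(\mathcal S(p>0)\big)$ is exactly the positive-derivative class $\Gamma_+(p,\fb)$, while $\mbox{In}_{\fb}\big(\mathcal S(p\geq 0)\big)$ is the union of the positive-derivative class and the infinite-rank class, namely $\Gamma_0(p,\fb)\cup\Gamma_+(p,\fb)$. This yields both equalities.

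Because the convergent Taylor expansion~(\ref{eqn:taylor}) and the sign analysis underlying it have already been carried out in Proposition~\ref{Trans} and reused in Lemma~\ref{lm:atm}, I expect no genuine obstacle: the lemma is essentially a bookkeeping restatement of that trichotomy. The only points that need care are verifying that the three cases really do partition $\RR^n$ (so that the description of each In-set is complete), and keeping the strict versus non-strict boundary straight, since the two In-sets differ precisely on the infinite-rank class $\Gamma_0$, where the trajectory slides inside $\{p=0\}$ and hence belongs to $\mathcal S(p\geq 0)$ but not to $\mathcal S(p>0)$.
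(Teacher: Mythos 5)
Your proof is correct and follows essentially the same route as the paper, which simply observes that the lemma was already established by the three-way case analysis on the sign of $L_{\fb}^{\gamma_{p,\fb}(\xx_0)}p(\xx_0)$ in the proof of Lemma~\ref{lm:atm} (itself a restatement of Proposition~\ref{Trans}). You are also right to read the condition $\gamma_{p,\fb}(\xx_0)<0$ in the definition of $\Gamma_+(p,\fb)$ as the typo it is, standing for $\gamma_{p,\fb}(\xx_0)<\infty$.
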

Next, we show $\Gamma_{0}$ and $\Gamma_{+}$ are semi-algebraic sets. We will do so in a more general way for parametric polynomials $p(\uu,\xx)$. In their proofs, we need the fundamental results about Lie derivatives shown in Section \ref{sec:fund}. In the sequel we adopt the convention that $\bigwedge_{i\in \emptyset}\phi_i=true$, where $\phi_i$ is a polynomial formula.

\begin{lemma} \label{lm:Gamma-0}
Given $p\,\define\,p(\uu,\xx)$ and polynomial vector field $\fb$, for any $\uu_0\in\RR^t$ we have
$$\Gamma_0(p_{\uu_0},\fb)=\mathcal S\big(\varphi_0(p,\fb)\mid_{\uu=\uu_0}\big)\enspace,$$
where
\begin{equation}\label{eqn:phi0}\varphi_{0}(p,\fb)\,\define\, \bigwedge_{i=0}^{N_{p,\fb}}L_{\fb}^i p =0\enspace.\end{equation}
\end{lemma}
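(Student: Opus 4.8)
The plan is to establish the set equality by two inclusions, the substance being the reduction of the ``all Lie derivatives vanish'' condition that defines $\Gamma_0$ to the \emph{finite} conjunction $\varphi_0$. Fix $\uu_0\in\RR^t$. Unwinding the definitions, $\xx_0\in\Gamma_0(p_{\uu_0},\fb)$ iff $\gamma_{p_{\uu_0},\fb}(\xx_0)=\infty$, i.e. $L^i_{\fb}p_{\uu_0}(\xx_0)=0$ for \emph{every} $i\in\mathbb N$, whereas $\xx_0\in\mathcal S(\varphi_0(p,\fb)\mid_{\uu=\uu_0})$ iff $L^i_{\fb}p_{\uu_0}(\xx_0)=0$ for merely $0\leq i\leq N_{p,\fb}$. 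The inclusion $\Gamma_0(p_{\uu_0},\fb)\subseteq\mathcal S(\varphi_0\mid_{\uu=\uu_0})$ is then immediate, since vanishing of all Lie derivatives entails vanishing of the first $N_{p,\fb}+1$ of them.

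For the reverse inclusion I would invoke the parametric analogue of the Fixed Point Theorem \ref{FP}. By the very definition of $N_{p,\fb}$ as the minimal index satisfying the membership hypothesis of Theorem \ref{FP} in the ring $\RR[\uu,\xx]$, we have $L^{N_{p,\fb}+1}_{\fb}p\in\langle L^0_{\fb}p,\ldots,L^{N_{p,\fb}}_{\fb}p\rangle$, so Theorem \ref{FP} supplies, for each $m>N_{p,\fb}$, cofactors $g_0,\ldots,g_{N_{p,\fb}}\in\RR[\uu,\xx]$ with $L^m_{\fb}p=\sum_{j\leq N_{p,\fb}}g_j\,L^j_{\fb}p$ as an identity in $\RR[\uu,\xx]$. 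Substituting $\uu=\uu_0$ keeps this a polynomial identity in $\xx$, namely $L^m_{\fb}p_{\uu_0}=\sum_{j\leq N_{p,\fb}}g_j(\uu_0,\xx)\,L^j_{\fb}p_{\uu_0}$. Now pick any $\xx_0\in\mathcal S(\varphi_0\mid_{\uu=\uu_0})$: each generator $L^j_{\fb}p_{\uu_0}(\xx_0)$ with $j\leq N_{p,\fb}$ vanishes by hypothesis, whence $L^m_{\fb}p_{\uu_0}(\xx_0)=0$ for all $m>N_{p,\fb}$ too. Combining this with the defining conditions of $\varphi_0$ shows $L^i_{\fb}p_{\uu_0}(\xx_0)=0$ for all $i$, i.e. $\gamma_{p_{\uu_0},\fb}(\xx_0)=\infty$ and $\xx_0\in\Gamma_0(p_{\uu_0},\fb)$.

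The one point demanding care --- and the reason the parametric versions of the earlier results are invoked rather than their plain counterparts --- is that the ideal membership must hold \emph{uniformly in the parameters}, i.e. in $\RR[\uu,\xx]$ and not merely for each instantiated $p_{\uu_0}$ with a bound possibly depending on $\uu_0$. This is precisely what the Parametric Rank Theorem \ref{Para-Rank} and its accompanying parametric $N_{p,\fb}$ guarantee: the cofactors $g_j$ are fixed elements of $\RR[\uu,\xx]$ independent of $\uu_0$, so the substitution $\uu\mapsto\uu_0$ commutes with both the Lie differentiation and the membership identity, and a single $N_{p,\fb}$ works for all $\uu_0$ simultaneously. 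It is this uniformity that renders $\varphi_0$ a single first-order polynomial formula describing $\Gamma_0(p_{\uu_0},\fb)$ for every $\uu_0$, and hence exhibits $\Gamma_0$ as semi-algebraic.
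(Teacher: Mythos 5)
Your proof is correct and follows essentially the same route as the paper's: the forward inclusion is immediate from the definition of pointwise rank, and the reverse inclusion rests on the uniform parametric bound $N_{p,\fb}$. The only cosmetic difference is that the paper invokes the parametric Rank Theorem as a black box (pointwise rank $>N_{p,\fb}$ forces it to be $\infty$), whereas you unpack the underlying ideal-membership identity from Theorem~\ref{FP} and specialize it at $\uu=\uu_0$; both arguments hinge on the same uniformity of $N_{p,\fb}$ in the parameters.
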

\begin{proof}
``$\subseteq$" This is trivial by definition of pointwise rank in Section \ref{sec:prelim}.

``$\supseteq$" If $\xx_0\in \mathcal S\big(\varphi_0(p,\fb)\mid_{\uu=\uu_0}\big)$, then by definition of pointwise rank we have $\gamma_{p_{\uu_0},\fb}(\xx_0)>N_{p,\fb}$. By the similarity of Theorem \ref{Rank} with parameters in polynomial $p$, we get $\gamma_{p_{\uu_0},\fb}(\xx_0)=\infty$. Thus $\xx_0\in \Gamma_0(p_{\uu_0},\fb).$
\end{proof}

\begin{lemma}\label{lm:Gamma+}
Given $p\,\define\,p(\uu,\xx)$ and polynomial vector field $\fb$, for any $\uu_0\in\RR^t$ we have
$$\Gamma_+(p_{\uu_0},\fb)=\mathcal S\big(\psi_+(p,\fb)\mid_{\uu=\uu_0}\big)\enspace,$$
where
\begin{equation}\label{eqn:psi+}\psi_{+}(p,\fb)\,\define\, \bigvee_{i=0}^{N_{p,\fb}} \psi^{(i)} (p,\fb)\qquad \quad\mbox{with}\end{equation}
$$\psi^{(i)}(p,\fb)\,\define\,\Big(\bigwedge_{j=0}^{i-1}L_{\fb}^j p=0 \Big)\wedge L_{\fb}^i p>0\enspace .$$
\end{lemma}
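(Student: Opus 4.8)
The plan is to mirror the structure of Lemma~\ref{lm:Gamma-0}, proving set equality by the two inclusions, and then to invoke the Rank Theorem (in its parametric form, Theorem~\ref{Para-Rank}) to bound the finitely many Lie derivatives that need to be checked. Fix an arbitrary $\uu_0\in\RR^t$ and write $p_{\uu_0}$ for the instantiation. By definition, $\xx_0\in\Gamma_+(p_{\uu_0},\fb)$ exactly when $\gamma_{p_{\uu_0},\fb}(\xx_0)<\infty$ and the first nonzero Lie derivative is positive. The formula $\psi_+(p,\fb)\mid_{\uu=\uu_0}$ is a disjunction over $i$ from $0$ to $N_{p,\fb}$, where the $i$-th disjunct asserts that $L_\fb^0 p_{\uu_0}=\cdots=L_\fb^{i-1}p_{\uu_0}=0$ and $L_\fb^i p_{\uu_0}>0$; that is, it asserts precisely that $\gamma_{p_{\uu_0},\fb}(\xx_0)=i$ and the sign of the first nonzero derivative is positive.

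For the inclusion ``$\supseteq$'', suppose $\xx_0\in\mathcal S\big(\psi_+(p,\fb)\mid_{\uu=\uu_0}\big)$. Then some disjunct $\psi^{(i)}$ holds at $\xx_0$ with $i\leq N_{p,\fb}$, which forces $\gamma_{p_{\uu_0},\fb}(\xx_0)=i<\infty$ and $L_\fb^{\gamma_{p_{\uu_0},\fb}(\xx_0)}p_{\uu_0}(\xx_0)>0$, so $\xx_0\in\Gamma_+(p_{\uu_0},\fb)$ directly from the definition. This direction requires no appeal to the rank bound. For ``$\subseteq$'', suppose $\xx_0\in\Gamma_+(p_{\uu_0},\fb)$, so that $k\,\define\,\gamma_{p_{\uu_0},\fb}(\xx_0)<\infty$ and $L_\fb^k p_{\uu_0}(\xx_0)>0$. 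The crucial point is to show $k\leq N_{p,\fb}$, so that the disjunct $\psi^{(k)}$ is actually present in the finite disjunction; this is exactly where the parametric rank result enters. Since $\gamma_{p_{\uu_0},\fb}(\xx_0)$ is finite, Theorem~\ref{Para-Rank} (applied uniformly over all $\uu_0$) together with the relation $\gamma_{p,\fb}\leq N_{p,\fb}$ gives $k\leq N_{p,\fb}$. Then $L_\fb^0 p_{\uu_0}(\xx_0)=\cdots=L_\fb^{k-1}p_{\uu_0}(\xx_0)=0$ and $L_\fb^k p_{\uu_0}(\xx_0)>0$ witness the $k$-th disjunct, so $\xx_0\in\mathcal S\big(\psi_+(p,\fb)\mid_{\uu=\uu_0}\big)$.

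I expect the main (and essentially the only nontrivial) obstacle to be the uniform-in-$\uu_0$ bound needed in the ``$\subseteq$'' direction: without a single $N_{p,\fb}$ that works for every parameter instantiation, the disjunction in $\psi_+$ would have to be infinite and the set would not manifestly be semi-algebraic. This is precisely what Theorem~\ref{Para-Rank} supplies, so the proof reduces to citing that parametric rank theorem rather than re-deriving the bound. I would also remark, to close the loop with the lemma's stated purpose, that because $\psi_+(p,\fb)\mid_{\uu=\uu_0}$ is a finite boolean combination of polynomial sign conditions, $\Gamma_+(p_{\uu_0},\fb)$ is a semi-algebraic set, which is the property ultimately needed to encode the invariance conditions of Theorem~\ref{thm:topo}.
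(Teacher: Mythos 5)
Your proof is correct and follows essentially the same route as the paper's: both directions are handled by unfolding the definition of pointwise rank, with the parametric Rank Theorem (Theorem~\ref{Para-Rank}) invoked exactly where you place it, namely to obtain the uniform bound $\gamma_{p_{\uu_0},\fb}(\xx_0)\leq N_{p,\fb}$ in the ``$\subseteq$'' direction. Your closing remark that the ``$\supseteq$'' direction needs no rank bound is also consistent with the paper's argument.
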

\begin{proof}
``$\supseteq$" If $\xx_0\in \mathcal S\big(\varphi_+(p,\fb)\mid_{\uu=\uu_0}\big)$, then by definition of pointwise rank, we have $$\Big(\gamma_{p_{\uu_0},\fb}(\xx_0)\leq N_{p,\fb}<\infty \Big)\,\wedge\, L_{\fb}^{\gamma_{p_{\uu_0},\fb}(\xx_0)}p_{\uu_0}(\xx_0)>0\,.$$
Thus $\xx_0\in \Gamma_+(p_{\uu_0},\fb)$\enspace .

``$\subseteq$" If $\xx_0\in \Gamma_+(p_{\uu_0},\fb)$, then by definition of pointwise rank we know  $\xx_0$ satisfies
$$L_{\fb}^0 p_{\uu_0}=0\wedge \cdots\wedge L_{\fb}^{\scriptscriptstyle{\gamma_{p_{\uu_0},\fb}(\xx_0)-1}}p_{\uu_0}=0
\wedge L_{\fb}^{\scriptscriptstyle{\gamma_{p_{\uu_0},\fb}(\xx_0)}}p_{\uu_0}>0\enspace . $$
 By the similarity of Theorem \ref{Rank} with parameters in polynomial $p$, we have $\gamma_{p_{\uu_0},\fb}(\xx_0)\leq N_{p,\fb}$. Thus $\uu_0,\xx_0$ satisfy $\phi^{\gamma_{p_{\uu_0},\fb}}(p,\fb)$. This means  $\xx_0\in \mathcal S\big(\varphi_+(p,\fb)\mid_{\uu=\uu_0}\big)$.
\end{proof}

Based on Lemma \ref{lm:Gamma+0}, \ref{lm:Gamma-0} and \ref{lm:Gamma+} we have
\begin{theorem}\label{lm:in-atm}
  For any polynomial $p$ and vector field $\fb$,
  \begin{align*}
  \emph{In}_{\fb}(\mathcal S(p> 0))&=\mathcal S(\psi_+(p,\fb)), \text{ and}\\
  \emph{In}_{\fb}(\mathcal S ( p\geq 0))&=\mathcal S\big(\psi_+(p,\fb) \vee \varphi_0(p,\fb)\big)
  \end{align*}
  where $\varphi_0(p,\fb)$ and $\psi_+(p,\fb)$ are defined in (\ref{eqn:phi0}) and (\ref{eqn:psi+}) respectively.
\end{theorem}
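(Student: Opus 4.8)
The plan is to obtain this theorem as an immediate consequence of the three preceding lemmas, with essentially no new work beyond bookkeeping. The first observation I would make is that a parameter-free polynomial $p$ can be regarded as an instantiation of a parametric polynomial over the empty parameter tuple, so that the $\uu_0$-versions of Lemmas \ref{lm:Gamma-0} and \ref{lm:Gamma+} apply directly to $p$ itself, the restriction $\mid_{\uu=\uu_0}$ becoming vacuous. With this understood, both equalities reduce to chaining identities already established.

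For the first equality I would start from Lemma \ref{lm:Gamma+0}, which gives $\mbox{In}_{\fb}(\mathcal S(p>0))=\Gamma_+(p,\fb)$, and then invoke Lemma \ref{lm:Gamma+} to rewrite $\Gamma_+(p,\fb)$ as $\mathcal S(\psi_+(p,\fb))$. Composing these two identities yields $\mbox{In}_{\fb}(\mathcal S(p>0))=\mathcal S(\psi_+(p,\fb))$ at once.

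For the second equality I would again begin with Lemma \ref{lm:Gamma+0}, now in its $\mathcal S(p\geq 0)$ form, which gives $\mbox{In}_{\fb}(\mathcal S(p\geq 0))=\Gamma_0(p,\fb)\cup\Gamma_+(p,\fb)$. Lemma \ref{lm:Gamma-0} identifies $\Gamma_0(p,\fb)$ with $\mathcal S(\varphi_0(p,\fb))$ and Lemma \ref{lm:Gamma+} identifies $\Gamma_+(p,\fb)$ with $\mathcal S(\psi_+(p,\fb))$. Finally, the closure of semi-algebraic sets under union recorded in Section \ref{subsec:semi-alg}, namely $\mathcal S(\varphi_1)\cup\mathcal S(\varphi_2)=\mathcal S(\varphi_1\vee\varphi_2)$, collapses the union into the single defining formula $\psi_+(p,\fb)\vee\varphi_0(p,\fb)$, completing the equality.

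Since all the substantive analytic content --- the sign trichotomy derived from the Taylor expansion in Lemma \ref{lm:atm}, the finiteness of the rank, and the equivalence of the pointwise-rank conditions with the formulas $\varphi_0$ and $\psi_+$ --- has already been discharged in Lemmas \ref{lm:atm} through \ref{lm:Gamma+}, I do not expect any genuine obstacle here. The only point that warrants a moment of care is the parametric/non-parametric matching, i.e.\ confirming that the lemmas phrased for $p(\uu,\xx)$ specialize correctly to a parameter-free $p$; but this verification is purely formal.
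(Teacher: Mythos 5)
Your proposal is correct and matches the paper exactly: the paper states this theorem with no separate proof, introducing it by ``Based on Lemma \ref{lm:Gamma+0}, \ref{lm:Gamma-0} and \ref{lm:Gamma+} we have\dots'', i.e.\ precisely the chaining of those three lemmas (plus the union-as-disjunction identity from Section \ref{subsec:semi-alg}) that you carry out. Your remark about specializing the parametric lemmas to a parameter-free $p$ is a fair point of care that the paper leaves implicit.
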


Therefore,  $\mbox{In}_{\fb}(H)$ can be translated into a polynomial formula. By a similar argument, we are able to prove that

\begin{theorem} \label{coro:dcp-ivin}  For a semi-algebraic set $H$ defined by formula (\ref{for:dom}) and a polynomial vector field $\fb$, we have
$$\emph{IvIn}_{\fb}(H)=\bigcup_{i=1}^I\bigcap_{j=1}^{J_{i}} \emph{IvIn}_{\fb}\big(\mathcal S(p_{ij}\triangleright 0)\big).$$
\end{theorem}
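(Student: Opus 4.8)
The plan is to reduce the statement to Theorem~\ref{thm:dcp-inward} by reversing the vector field. The key observation is that the inverse trajectory of $\fb$ is precisely the forward trajectory of the opposite field $-\fb$: setting $\yy(t)\,\define\,\xx(\xx_0;-t)$ gives $\yy(0)=\xx_0$ and $\dot\yy(t)=-\fb(\yy(t))$, so by the uniqueness of solutions of $\dot\xx=-\fb(\xx)$ (valid because $-\fb$ is again a polynomial, hence locally Lipschitz, vector field) the map $\yy$ coincides on a common interval $(0,\epsilon)$ with the trajectory of $-\fb$ issuing from $\xx_0$. Comparing the defining quantifier prefixes then yields, for every $A\subseteq\RR^n$, the identity $\mbox{IvIn}_{\fb}(A)=\mbox{In}_{-\fb}(A)$ (and, analogously, $\mbox{IvOut}_{\fb}(A)=\mbox{Out}_{-\fb}(A)$).

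With this in hand the theorem is immediate. Since $-\fb\in\QQ^n[\xx]$, Theorem~\ref{thm:dcp-inward} applies verbatim to it, giving
$$\mbox{In}_{-\fb}(H)=\bigcup_{i=1}^I\bigcap_{j=1}^{J_i}\mbox{In}_{-\fb}\big(\mathcal S(p_{ij}\triangleright 0)\big)\enspace.$$
Rewriting every occurrence of $\mbox{In}_{-\fb}$ as $\mbox{IvIn}_{\fb}$ through the identity above turns this into the asserted decomposition of $\mbox{IvIn}_{\fb}(H)$. As a sanity check, the substitution is consistent with the earlier machinery because $L_{-\fb}^k p=(-1)^kL_{\fb}^k p$, so $\gamma_{p,-\fb}=\gamma_{p,\fb}$ pointwise and only the sign of the first nonzero Lie derivative is flipped; in particular the rank bound $N_{-\fb,p}=N_{p,\fb}$ is preserved, so all the computability statements transfer as well.

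I expect the only genuinely nontrivial point to be the trajectory-reversal identity of the first paragraph, i.e. checking that the inverse trajectory and the forward trajectory of $-\fb$ truly agree on a neighborhood of $0$; this rests on the local existence and uniqueness theorem, and everything downstream is a mechanical transcription. If one prefers to avoid introducing $-\fb$ explicitly, the alternative is to reprove the analogs of Lemma~\ref{lm:atm} and Lemma~\ref{lm:conjunct} for $\mbox{IvIn}$ and $\mbox{IvOut}$ directly, by expanding $p(\xx(\xx_0;-t))=\sum_k(-1)^kL_{\fb}^k p(\xx_0)\,t^k/k!$ and performing the same case analysis on the first nonzero Lie derivative; the single place needing care there is the extra factor $(-1)^{\gamma_{p,\fb}(\xx_0)}$ in the sign condition, which matches the field-reversal computation exactly.
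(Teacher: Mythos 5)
Your proof is correct, and it takes a genuinely different route from the one the paper intends. The paper gives no explicit argument for this theorem: it is introduced only with the remark that it follows ``by a similar argument'' to Theorem~\ref{thm:dcp-inward}, i.e.\ by re-running Lemmas~\ref{lm:atm} and~\ref{lm:conjunct} for $\mbox{IvIn}_{\fb}$ and $\mbox{IvOut}_{\fb}$ using the time-reversed Taylor expansion $p(\xx(\xx_0;-t))=\sum_k(-1)^kL_{\fb}^kp(\xx_0)\,t^k/k!$ --- which is exactly the alternative you sketch in your closing paragraph. Your primary route instead reduces the statement to the already-proved Theorem~\ref{thm:dcp-inward} applied to $-\fb$, via the identity $\mbox{IvIn}_{\fb}(A)=\mbox{In}_{-\fb}(A)$; this identity is correctly justified by noting that $\yy(t)\define\xx(\xx_0;-t)$ satisfies $\dot\yy=-\fb(\yy)$ with $\yy(0)=\xx_0$ and invoking uniqueness of solutions for the locally Lipschitz field $-\fb$, and the reduction is legitimate because $-\fb\in\QQ^n[\xx]$ whenever $\fb$ is, so Theorem~\ref{thm:dcp-inward} applies verbatim while the decomposition of $H$ into the $p_{ij}$ is unchanged. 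What your reduction buys is the decomposition with no duplicated case analysis, and your observation $L_{-\fb}^kp=(-1)^kL_{\fb}^kp$ (hence $\gamma_{p,-\fb}=\gamma_{p,\fb}$ pointwise and the same rank bound) additionally explains, for free, the $(-1)^i$ sign appearing in $\varphi^{(i)}$ of Theorem~\ref{coro:semi-ivin-atm}. The paper's implicit direct route avoids introducing a second vector field but is otherwise a verbatim repetition of the forward-time proof; your version is the cleaner of the two and is complete as written.
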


Accordingly,
\begin{theorem}\label{coro:semi-ivin-atm}
For any polynomial $p$ and vector field $\fb$,
\begin{align*}
\emph{IvIn}_{\fb}\big(\mathcal S(p>0)\big)&=\mathcal S\big(\varphi_+(p,\fb)\big), \text{ and} \\
   \emph{IvIn}_{\fb}\big(\mathcal S(p\geq 0)\big)&=\mathcal S\Big(\varphi_+(p,\fb)\vee \varphi_0(p,\fb)\Big)
\end{align*}

where
\begin{align}
  \varphi_{+}(p,\fb)&\,\define\, \bigvee_{i=0}^{N_{p,\fb}} \varphi^{(i)} (p,\fb)\qquad \mbox{with}\label{eqn:phi-iv+}\\
 \varphi^{(i)}(p,\fb) &\,\define\, \Big(\bigwedge_{j=0}^{i-1}L_{\fb}^j p=0 \Big)\wedge \Big((-1)^i\cdot L_{\fb}^i p>0\Big).\nonumber
\end{align}
\end{theorem}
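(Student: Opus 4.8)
The plan is to reduce this theorem to the already-established characterization of the inward set, Theorem~\ref{lm:in-atm}, by regarding the inverse trajectory of $\fb$ as the forward trajectory of the reversed field $-\fb$. The first step is to record the set-level identity $\mbox{IvIn}_{\fb}(A)=\mbox{In}_{-\fb}(A)$ for every $A\subseteq\RR^n$. This is immediate from the definitions: the inverse trajectory $\xx(\xx_0;-t)$ of $\dot\xx=\fb(\xx)$ is exactly the forward trajectory through $\xx_0$ of $\dot\xx=-\fb(\xx)$, so the defining condition ``$\exists\epsilon>0\,\forall t\in(0,\epsilon).\,\xx(\xx_0;-t)\in A$'' of $\mbox{IvIn}_{\fb}(A)$ coincides verbatim with the defining condition of $\mbox{In}_{-\fb}(A)$.

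The second step is to relate the Lie derivatives along $-\fb$ to those along $\fb$. A routine induction on $k$, using $L^k_{\fb}p=(\frac{\partial}{\partial\xx}L^{k-1}_{\fb}p,\fb)$ together with the linearity of the inner product in its second argument, gives $L^k_{-\fb}p=(-1)^k L^k_{\fb}p$ for all $k$. Two consequences follow at once. Since $(-1)^k\neq 0$, the first nonzero derivative occurs at the same index, so $\gamma_{p,-\fb}(\xx)=\gamma_{p,\fb}(\xx)$ for every $\xx$ (and in particular $\Gamma_0(p,-\fb)=\Gamma_0(p,\fb)$). Moreover the ideals $\langle L^0_{-\fb}p,\dots,L^i_{-\fb}p\rangle$ and $\langle L^0_{\fb}p,\dots,L^i_{\fb}p\rangle$ are equal, their generators differing only by the nonzero scalars $(-1)^j$; hence the fixed-point index of Theorem~\ref{FP} is unchanged and $N_{p,-\fb}=N_{p,\fb}$.

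The third step applies Theorem~\ref{lm:in-atm} to the vector field $-\fb$ and rewrites the output in terms of $\fb$. That theorem gives $\mbox{In}_{-\fb}(\mathcal S(p>0))=\mathcal S(\psi_+(p,-\fb))$ and $\mbox{In}_{-\fb}(\mathcal S(p\geq 0))=\mathcal S(\psi_+(p,-\fb)\vee\varphi_0(p,-\fb))$. Substituting $L^j_{-\fb}p=(-1)^j L^j_{\fb}p$ into $\psi^{(i)}(p,-\fb)$, each equality $L^j_{-\fb}p=0$ turns into $L^j_{\fb}p=0$ while the strict inequality $L^i_{-\fb}p>0$ turns into $(-1)^i L^i_{\fb}p>0$; thus $\psi^{(i)}(p,-\fb)$ becomes literally $\varphi^{(i)}(p,\fb)$, and using $N_{p,-\fb}=N_{p,\fb}$ to match the disjunction ranges we obtain $\psi_+(p,-\fb)=\varphi_+(p,\fb)$. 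The same substitution sends $\varphi_0(p,-\fb)=\bigwedge_{i=0}^{N_{p,-\fb}}L^i_{-\fb}p=0$ to $\varphi_0(p,\fb)$. Combining these with the first-step identity $\mbox{IvIn}_{\fb}=\mbox{In}_{-\fb}$ yields both claimed equalities.

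No genuinely difficult step arises; the substance is entirely in setting up the correspondence correctly, and the proof is essentially a mechanical rewriting of the formulas already proven for $\mbox{In}_{\fb}$. The one point deserving explicit care---which I would isolate as a small preliminary observation---is transporting the uniform rank bound to the reversed field, namely $N_{p,-\fb}=N_{p,\fb}$, since this bound controls the ranges of the disjunctions defining $\varphi_+$ and $\varphi_0$; this is precisely why I single out the ideal-equality argument in the second step.
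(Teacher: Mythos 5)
Your proof is correct and is essentially the argument the paper leaves implicit (the paper offers no proof beyond ``Accordingly''): the $(-1)^i$ factors arise precisely from time reversal, whether one re-runs the Taylor expansion of $p(\xx(-t))$ from Lemma \ref{lm:atm} with $-t$ in place of $t$, or, as you do, reduces to Theorem \ref{lm:in-atm} for the reversed field $-\fb$ via the identities $\mbox{IvIn}_{\fb}(A)=\mbox{In}_{-\fb}(A)$ and $L^k_{-\fb}p=(-1)^kL^k_{\fb}p$. Your explicit verification that $N_{p,-\fb}=N_{p,\fb}$ (via equality of the generated ideals) is exactly the detail needed to make the disjunction ranges in $\varphi_+$ and $\varphi_0$ match, and is the one point the paper glosses over.
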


Now we are able to present our main result of automatic SAI generation for PDS.
\begin{theorem}[Main Result]\label{thm:main}
A semi-algebraic set \\$\mathcal S(P)$ with
$$P\,\define \, \bigvee_{k=1}^{K} \left( \bigwedge_{j=1}^{j_k} p_{kj}(\uu_{kj},\xx)\geq 0 \quad\wedge\, \bigwedge_{j=j_k+1}^{J_k} p_{kj}(\uu_{kj},\xx)>0 \right)
$$
is a continuous invariant of the PDS $\big(\mathcal S(H),\fb, \Xi\big)$ with
$$H\,\define\, \bigvee_{m=1}^{M}\left( \bigwedge_{l=1}^{l_m} p_{ml}(\xx)\geq 0 \quad \wedge \, \bigwedge_{l=l_m+1}^{L_m} p_{ml}(\xx)>0 \right),$$
if and only if
$\uu\define \langle \uu_{kj} \rangle$ satisfy
$$\forall \xx. \left(\begin{array}{l}
    (\Xi(\xx)\rightarrow P(\uu,\xx)) \wedge \\
    \big(P\wedge H\wedge \varphi_{H}\rightarrow \varphi_{P}\big) \wedge \big(\neg P \wedge H\wedge \varphi_H^{\scriptsize{\emph{Iv}}} \rightarrow \neg \varphi_P^{\scriptsize{\emph{Iv}}}\,\big)
    \end{array} \right),$$
where
\begin{align*}
  \varphi_H &\,\define\, \bigvee_{m=1}^{M}\left( \bigwedge_{l=1}^{l_m} \psi_{0,+}(p_{ml},\fb) \wedge \, \bigwedge_{l=l_m+1}^{L_m} \psi_+(p_{ml},\fb) \right),\\
  \varphi_P &\,\define\,\bigvee_{k=1}^{K}\left( \bigwedge_{j=1}^{j_k} \psi_{0,+}(p_{kj},\fb) \wedge \, \bigwedge_{j=j_k+1}^{J_k} \psi_+(p_{kj},\fb) \right),\\
  \varphi_H^{\scriptsize{\emph{Iv}}} &\,\define\,\bigvee_{m=1}^{M}\left( \bigwedge_{l=1}^{l_m} \varphi_{0,+}(p_{ml},\fb) \wedge \, \bigwedge_{l=l_m+1}^{L_m} \varphi_+(p_{ml},\fb) \right),\\
  \varphi_P^{\scriptsize{\emph{Iv}}} &\,\define\,\bigvee_{k=1}^{K}\left( \bigwedge_{j=1}^{j_k} \varphi_{0,+}(p_{kj},\fb) \wedge \, \bigwedge_{j=j_k+1}^{J_k} \varphi_+(p_{kj},\fb) \right),
\end{align*}
with $\psi_{0,+}(p,\fb)\,\define\,\psi_+(p,\fb)\vee\varphi_0(p,\fb)$ and 
$\varphi_{0,+}(p,\fb)\,\define\,\varphi_+(p,\fb)\vee\varphi_0(p,\fb)$.
\end{theorem}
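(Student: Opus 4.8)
The plan is to reduce the statement to the purely topological characterization of continuous invariants given by Theorem \ref{thm:topo}, and then to rewrite each of its three conditions as an equivalent first-order polynomial formula using the decomposition results (Theorems \ref{thm:dcp-inward} and \ref{coro:dcp-ivin}) together with the atomic translations (Theorems \ref{lm:in-atm} and \ref{coro:semi-ivin-atm}). Since both $H$ and $P$ are already presented in the disjunctive normal form required by formula (\ref{for:dom}), no new idea beyond careful bookkeeping is needed. First I would dispose of condition 1 of Theorem \ref{thm:topo}: the inclusion $\Xi\subseteq P$ is by definition exactly $\forall\xx.(\Xi(\xx)\rightarrow P(\uu,\xx))$, which is the first conjunct of the target formula.

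Second, for condition 2 I would translate $\mbox{In}_{\fb}(H)$ and $\mbox{In}_{\fb}(P)$ into polynomial formulas. Applying Theorem \ref{thm:dcp-inward} to the DNF of $H$ pushes the operator $\mbox{In}_{\fb}$ through the outer union and inner intersection, reducing it to the atomic sets $\mbox{In}_{\fb}(\mathcal S(p_{ml}\triangleright 0))$. By Theorem \ref{lm:in-atm}, each atomic set equals $\mathcal S(\psi_+(p_{ml},\fb))$ when $\triangleright$ is $>$ and $\mathcal S(\psi_+(p_{ml},\fb)\vee\varphi_0(p_{ml},\fb))=\mathcal S(\psi_{0,+}(p_{ml},\fb))$ when $\triangleright$ is $\geq$. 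Re-assembling along the same union/intersection structure yields precisely $\mbox{In}_{\fb}(H)=\mathcal S(\varphi_H)$, and the identical computation on $P$ gives $\mbox{In}_{\fb}(P)=\mathcal S(\varphi_P)$. Condition 2, namely $\forall\xx\in P\cap H\cap\mbox{In}_{\fb}(H).\,\xx\in\mbox{In}_{\fb}(P)$, therefore becomes the implication $P\wedge H\wedge\varphi_H\rightarrow\varphi_P$.

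Third, I would repeat this argument for condition 3 with the inverse-trajectory operators. Theorem \ref{coro:dcp-ivin} performs the same decomposition for $\mbox{IvIn}_{\fb}$, and Theorem \ref{coro:semi-ivin-atm} supplies the atomic translations $\varphi_+$ (for $>$) and $\varphi_+\vee\varphi_0=\varphi_{0,+}$ (for $\geq$); note that the sign alternation $(-1)^i$ inside $\varphi^{(i)}$ is exactly what distinguishes the inverse case from the forward case. This gives $\mbox{IvIn}_{\fb}(H)=\mathcal S(\varphi_H^{\mathrm{Iv}})$ and $\mbox{IvIn}_{\fb}(P)=\mathcal S(\varphi_P^{\mathrm{Iv}})$, so condition 3, namely $\forall\xx\in P^c\cap H\cap\mbox{IvIn}_{\fb}(H).\,\xx\in(\mbox{IvIn}_{\fb}(P))^c$, becomes $\neg P\wedge H\wedge\varphi_H^{\mathrm{Iv}}\rightarrow\neg\varphi_P^{\mathrm{Iv}}$. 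Conjoining the three resulting formulas and prefixing the universal quantifier reproduces the displayed equivalence; since each rewriting step is an equivalence, both directions follow at once.

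I expect the main obstacle to be bookkeeping rather than anything conceptual: one must verify that the $\geq$-versus-$>$ split in the templates for $H$ and $P$ is matched exactly by the $\psi_{0,+}$-versus-$\psi_+$ (respectively $\varphi_{0,+}$-versus-$\varphi_+$) split in $\varphi_H,\varphi_P$ (respectively $\varphi_H^{\mathrm{Iv}},\varphi_P^{\mathrm{Iv}}$), and that the outer disjunctions and inner conjunctions of the DNF are preserved under the set-to-formula translation. A secondary point worth checking is that Theorems \ref{thm:dcp-inward}--\ref{coro:semi-ivin-atm} apply verbatim to the candidate $P$ even though $P$ need not be contained in $H$; this is legitimate because those results hold for an arbitrary semi-algebraic set given in DNF, independently of the ambient domain.
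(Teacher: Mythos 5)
Your proposal is correct and follows exactly the paper's route: the paper proves this theorem as a direct consequence of Theorems \ref{thm:topo}, \ref{thm:dcp-inward}, \ref{lm:in-atm}, \ref{coro:dcp-ivin} and \ref{coro:semi-ivin-atm}, which is precisely the assembly you describe. Your write-up is in fact more explicit than the paper's one-line proof about the $\geq$-versus-$>$ bookkeeping and about why the lemmas apply to $P$ independently of $H$.
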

\begin{proof}
This theorem is a direct consequence of Theorem \ref{thm:topo}, \ref{thm:dcp-inward}, \ref{lm:in-atm}, \ref{coro:dcp-ivin} and \ref{coro:semi-ivin-atm}.
\end{proof}

Note that $\varphi_H$ and $\varphi_H^{\text{Iv}}$ are trivially ``true" when $H$ is the whole space $\mathbb R^n$.

Compared to related work, e.g \cite{PlatzerClarke08,Prajna04,Prajna07,Sankar10}, our method for SAI generation based on Theorem \ref{thm:main} has the following two features:
\begin{enumerate}
  \item Given a PDS (with arbitrary semi-algebraic domain and initial states), we consider arbitrary semi-algebraic sets as invariants, which are of complicated forms and may be neither open nor closed.
  \item Our criterion for checking semi-algebraic invariants for PDS is sound and complete; our method for automatically generating semi-algebraic invariants is sound, and complete w.r.t to the predefined template.
\end{enumerate}

Now we demonstrate how our approach can be used to generate a general SAI by the following example.
\begin{example}
Let  $\fb(x,y)=(\dot{x}=-2y, \dot{y}=x^2)$ with $H\, \define \, \mathbb R^2$ and $\Xi \define x+y\geq 0$. Take a template:  $\tau\define x-a\geq 0 \vee y-b> 0$. By Theorem \ref{thm:main}, $\tau$ is an SAI of $(H,\fb,\Xi)$ iff $(a,b)$ satisfies the following two formulas
\begin{align}
~& x+y\geq 0 \rightarrow (x-a\geq 0 \vee y-b> 0) \\
~& (\tau  \rightarrow \zeta) \wedge (\neg \tau \rightarrow \neg \xi)
\end{align}
for all $(x,y)\in \RR ^2$, where
\begin{align*}
  \zeta  \define &  (x-a>0) \vee (x-a=0 \wedge  -2y>0)  \\
  & \vee (x-a=0 \wedge -2y=0 \wedge -2x^2\geq 0) \\
  &\vee (y-b>0) \vee (y-b=0 \wedge x^2>0)  \\
  &\vee (y-b=0 \wedge x^2=0 \wedge -4yx>0)  \\
  & \vee(y-b=0 \wedge x^2=0 \wedge -4yx=0 \wedge 8y^2-4x^3>0)\\
  \xi \define & (x-a>0) \vee (x-a=0 \wedge  -2y<0)  \\
  & \vee(x-a=0 \wedge  -2y=0 \wedge -2x^2\geq 0) \\
  & \vee (y-b>0) \vee (y-b=0 \wedge x^2<0) \\
  & \vee(y-b=0 \wedge x^2=0 \wedge -4yx>0) \\
  & \vee (y-b=0 \wedge x^2=0 \wedge -4yx=0 \wedge 8y^2-4x^3<0)
\end{align*}

By applying quantifier elimination to this formula, we get $a+b\leq 0 \wedge b\leq 0$. Let  $a=-1$ and $b=-0.5$, and it results that $\{(x,y)\in \mathbb R^2 \mid x\geq -1 \vee y> -0.5\}$ is an SAI for this PDS, which is shown in IV of Figure \ref{fig:Inv}.
\end{example}

Note that in the above example, the generated SAI is a general semi-algebraic set that is a union of two simple semi-algebraic sets, which is neither closed nor open.

\section{Case Study}\label{sec:case}

In this section, we show that our method presented above can be used to generate continuous invariants for some real systems.

\subsection{Formal Verification of CTCS-3}

In \cite{aplas},  the authors  use \emph{HCSP} \cite{He94, Zhou95} to formally model the \emph{Chinese Train Control System at Level 3 (CTCS-3)} \cite{Zhang08}. They also propose a calculus of HCSP  for the purpose of
verifying safety properties of CTCS-3. For this calculus to work, effective techniques for dealing with continuous dynamics must be incorporated.

Consider the following fragment of the HCSP model of CTCS-3:
$$
P_{\textit{ebi}}\,\,\define\,\, \la \dot{s}=v, \dot{v}=a\ra
\rightarrow v\geq v.\textit{Seg} \,; \textit{flag}_\textit{EB}:=
\textit{true}\,; P_\textit{EB}\,.
$$
Process $P_{\textit{ebi}}$ models the running of a train, with $s,v,a$ representing its position, velocity and acceleration ($a$ is a constant) respectively.
Once $v$ exceeds the speed limit $v.\textit{Seg}$ of the current segment, $\textit{flag}_\textit{EB}$ for emergency brake is set to \textit{true} and the train
starts braking immediately, expressed by the subprocess
$P_\textit{EB}$.

The safety property needs to be verified about $P_\textit{ebi}$ can
be stated as
$$ \textit{Inv}\,\, \define\, v \geq v.\textit{Seg}\rightarrow
\textit{flag}_{\textit{EB}}=\textit{true}\,,$$ which means whenever the train's speed exceeds certain limit, it must execute the emergency brake process.

To verify this property, i.e. to check that $\textit{Inv}$ is indeed an invariant of $P_{\textit{ebi}}$, according to the calculus in \cite{aplas}, it amounts to check
that $v < v.\textit{Seg}$ is a continuous invariant of the PDS $(H,\fb,\Xi)$, where
$H\,\define\,\mathcal S(v < v.\textit{Seg})$, $\fb\,\define\,(v,a)$ and $\Xi\,\define\,\{(s_0,v_0)\}$ with $v_0<v.\textit{Seg}$. According to our method, this can be further reduced to the checking of
the validity of
$$ \forall v. (v=v.\textit{Seg}\wedge
v<v.\textit{Seg}\rightarrow a\leq 0),$$ which is obvious.

Perhaps this example seems a bit trivial, for the continuous dynamics is an affine system and the required invariant coincides with the domain. What we want to stress here is the completeness of our criterion for checking continuous invariants compared to others. For example, the principle given in \cite{PlatzerClarke08} requires the directional derivative of an invariant in the direction of the vector field to have the same sign in the domain. As a result, it may fail to generate the above invariant $\mathcal S(v < v.\textit{Seg})$, because
$$\forall v.(v < v.\textit{Seg}\rightarrow \dot{v}=a <
0)$$
is $\textit{false}$ when $a\geq 0$.

\subsection{Collision Avoidance Maneuvers}
We consider the following two-aircraft flight dynamics from \cite{PlatzerClarke09}:
\begin{equation}\label{eqn:flight}
\fb\,\define\,
  \left[\begin{array}{llll}
  \dot x_1=d_1 & \dot y_1=e_1 & \dot d_1=-\omega d_2 & \dot e_1= -\theta e_2 \\
\dot x_2=d_2 & \dot y_2=e_2 & \dot d_2=\omega d_1 & \dot e_2= \theta e_1 \\
  \end{array}\right]\,.
\end{equation}
System (\ref{eqn:flight}) has 8 variables: $(x_1,x_2)$ and $(y_1,y_2)$ represent the positions of aircraft 1 and 2 respectively, and $(d_1,d_2)$ and $(e_1,e_2)$ represent their velocities.  The  parameters $\omega$ and $\theta$ denote the angular speed of the two aircrafts.

We shall apply our method to generating special invariants of form $p=0$ for PDS $(H,\fb,\Xi)$ with $H\,\define\,\mathbb R^8$ and $\fb$ defined in (\ref{eqn:flight}). For simplicity, we take $\Xi$ to be a singleton
$\{(x_1^0,x_2^0,d_1^0,d_2^0,y_1^0,y_2^0,e_1^0,e_2^0)\}$\,.

In order to determine candidates for invariants of $(H,\fb,\Xi)$, we enumerate parametric polynomials $p\,\define\,p(\uu,\xx)$ by the degree of $p$ and the number of variables appearing in it. For example, we can choose the linear template $p(\uu,\xx)\,\define\,u_1 x_1 + u_2 x_2 + u_3 d_1 + u_4 d_2 + u_0$.

According to Theorem \ref{thm:main}, it is easy to check that $p(\uu,\xx)=0$ is an invariant of $(H,\fb,\Xi)$ if and only if $\uu$ satisfies
\begin{itemize}
  \item $\forall \xx. \,\Xi\rightarrow p=0$\,;\, and
  \item $\forall \xx. \, p=0\rightarrow \bigwedge_{i=1}^{N_{p,\fb}}L_{\fb}^i p(\uu,\xx)=0$\,.
\end{itemize}
For the template defined above,  we can get $N_{p,\fb}=2$. By applying quantifier elimination to the corresponding constraint, we get
$\,u_2 - u_3 \omega = 0 \wedge u_1 + u_4 \omega= 0 \wedge
u_0 + u_1 x_1^0 + u_2 x_2^0 + u_3 d_1^0  + u_4 d_2^0 = 0\,.$
Thus we can obtain the following invariants by assigning suitable values to $u_i$s:
\begin{itemize}
  \item $\omega x_2 + d_1 - \omega x_2^0 - d_1^0=0$;
  \item $-\omega x_1 + d_2 + \omega x_1^0 - d_2^0 =0$;
  \item $-\omega x_1 + \omega x_2 + d_1 + d_2 + \omega x_1^0- \omega x_2^0 - d_1^0  - d_2^0 =0$.
\end{itemize}

If we use the quadratic template $p\,\define\, u_1 d_1^2 + u_2 d_2^2 +u_0$, we can also get $N_{p,\fb}=2$, and the constraint for $\uu$ is
$\,u_1 - u_2 = 0 \wedge  u_0  + u_1 (d_1^0)^2 + u_2 (d_2^0)^2 = 0\,.$
Let $u_1=u_2=1$ and we obtain an invariant
$$d_1^2+d_2^2- (d_1^0)^2 - (d_2^0)^2=0\enspace .$$

Using arbitrary semi-algebraic templates, we can generate invariants beyond polynomial equations for $(H,\fb,\Xi)$, at the cost of heavier computation.

\section{Conclusions}\label{sec:con}
In this paper, we present a sound and complete criterion for checking SAIs for PDSs, as well as a relatively complete method for automatic SAI generation using templates. Our approach is based on the computable algebraic-geometry theory.  Our work in this paper actually completes the gap left open in \cite{Taly09}. Compared with the related work, more invariants can be generated through our approach. This is demonstrated by simple examples and case studies.

In the future, we will concentrate on the following problems. Firstly, we believe that our method can be applied to generate invariance sets for stability analysis, controller synthesis and so on  in control theory, in particular for construction of Lyapunov functions. Secondly, we will consider how to extend the approach to more general dynamical systems whose vector fields are functions beyond polynomials. Since our approach makes use of first-order quantifier elimination which is with doubly exponential cost \cite{dh88}, how to improve the efficiency of our approach will be our main future work. For instance of linear templates, it is helpful to reduce the complexity via linear programming.

%
%
%



\end{document}